\documentclass[letterpaper]{article} 
\usepackage{aaai23}  
\usepackage{times}  
\usepackage{helvet}  
\usepackage{courier}  
\usepackage[hyphens]{url}  
\usepackage{graphicx} 
\urlstyle{rm} 
\usepackage{natbib}  
\usepackage{caption} 
\frenchspacing  
\setlength{\pdfpagewidth}{8.5in} 
\setlength{\pdfpageheight}{11in} 
%
\usepackage{algorithm}
\usepackage{algorithmic}
\usepackage{mathtools}
\usepackage{xcolor}
\usepackage{threeparttable}

%
\usepackage{newfloat}
\usepackage{listings}
\DeclareCaptionStyle{ruled}{labelfont=normalfont,labelsep=colon,strut=off} 
\lstset{%
	basicstyle={\footnotesize\ttfamily},
	numbers=left,numberstyle=\footnotesize,xleftmargin=2em,
	aboveskip=0pt,belowskip=0pt,%
	showstringspaces=false,tabsize=2,breaklines=true}
\floatstyle{ruled}
\newfloat{listing}{tb}{lst}{}
\floatname{listing}{Listing}
%
\pdfinfo{
/Title (Energy-Motivated Equivariant Pretraining for 3D Molecular Graphs)
/Author (Paper ID: 6844)
/TemplateVersion (2023.1)
}

\usepackage{booktabs}       
\usepackage{multirow}
\usepackage{hyperref}
\usepackage{amsthm}

\usepackage{amsmath,amsfonts,bm}









\def\eqref#1{equation~\ref{#1}}









\def\1{\bm{1}}



\def\rvepsilon{{\mathbf{\epsilon}}}





\def\vf{{\bm{f}}}

\def\vp{{\bm{p}}}

\def\vs{{\bm{s}}}
\def\vt{{\bm{t}}}
\def\vu{{\bm{u}}}

\def\vx{{\bm{x}}}
\def\vy{{\bm{y}}}



\def\mD{{\bm{D}}}

\def\mF{{\bm{F}}}

\def\mH{{\bm{H}}}
\def\mI{{\bm{I}}}

\def\mO{{\bm{O}}}

\def\mR{{\bm{R}}}

\def\mX{{\bm{X}}}
\def\mY{{\bm{Y}}}

\DeclareMathAlphabet{\mathsfit}{\encodingdefault}{\sfdefault}{m}{sl}
\SetMathAlphabet{\mathsfit}{bold}{\encodingdefault}{\sfdefault}{bx}{n}


\def\gE{{\mathcal{E}}}
\def\gF{{\mathcal{F}}}
\def\gG{{\mathcal{G}}}

\def\gL{{\mathcal{L}}}

\def\gN{{\mathcal{N}}}
\def\gO{{\mathcal{O}}}

\def\gV{{\mathcal{V}}}



\def\sG{{\mathbb{G}}}

\def\sI{{\mathbb{I}}}

\def\sR{{\mathbb{R}}}








\newcommand{\E}{\mathbb{E}}

\newcommand{\R}{\mathbb{R}}



\newtheorem{theorem}{Theorem}
\newtheorem{lemma}[theorem]{Lemma}
\newtheorem{definition}{Definition}

\newtheorem{proposition}{Proposition}

\setcounter{secnumdepth}{2} 

%


\title{Energy-Motivated Equivariant Pretraining for 3D Molecular Graphs}

\author{%
    Rui Jiao\textsuperscript{1,2}, Jiaqi Han\textsuperscript{1,2}, Wenbing Huang\textsuperscript{4,5, \thanks{Wenbing Huang and Yang Liu are the corresponding authors.}}, Yu Rong\textsuperscript{6}, Yang Liu\textsuperscript{1,2,3, \footnotemark[1]} \\
}

\affiliations{
    \textsuperscript{1} Beijing National Research Center for Information Science and Technology (BNRist),\\
~~~Department of Computer Science and Technology, Tsinghua University\\
    \textsuperscript{2} Institute for AI Industry Research (AIR), Tsinghua University \\
    \textsuperscript{3} Beijing Academy of Artificial Intelligence \\
    \textsuperscript{4} Gaoling School of Artificial Intelligence, Renmin University of China\\
    \textsuperscript{5} Beijing Key Laboratory of Big Data Management and Analysis Methods\\
    \textsuperscript{6} Tencent AI Lab \\
    \texttt{\{jiaor21,hanjq21\}@mails.tsinghua.edu.cn, hwenbing@126.com,}\\
    \texttt{yu.rong@hotmail.com, liuyang2011@tsinghua.edu.cn}
    }

\begin{document}

\maketitle

\begin{abstract}

Pretraining molecular representation models without labels is fundamental to various applications. Conventional methods mainly process 2D molecular graphs and focus solely on 2D tasks, making their pretrained models incapable of characterizing 3D geometry and thus defective for downstream 3D tasks. In this work, we tackle 3D molecular pretraining in a complete and novel sense. In particular, we first propose to adopt an equivariant energy-based model as the backbone for pretraining, which enjoys the merits of fulfilling the symmetry of 3D space. Then we develop a node-level pretraining loss for force prediction, where we further exploit the Riemann-Gaussian distribution to ensure the loss to be E(3)-invariant, enabling more robustness. Moreover, a graph-level noise scale prediction task is also leveraged to further promote the eventual performance. We evaluate our model pretrained from a large-scale 3D dataset GEOM-QM9 on two challenging 3D benchmarks: MD17 and QM9. Experimental results demonstrate the efficacy of our method against current state-of-the-art pretraining approaches, and verify the validity of our design for each proposed component. Code is available at \url{https://github.com/jiaor17/3D-EMGP}.
\end{abstract}

\section{Introduction}
\label{sec:intro}
Learning informative molecular representation is a fundamental step for various downstream applications, including molecular property prediction~\cite{gilmer2017neural,kearnes2016molecular}, virtual screening~\cite{wallach2015atomnet,zheng2019onionnet}, and Molecular Dynamics (MD) simulation~\cite{chmiela2017machine}. Recent methods exploit Graph Neural Networks (GNNs)~\cite{gilmer2017neural,xu2018powerful} for their power in capturing the topology of molecules, which yet is label-hungry and thus powerless for real scenarios when molecular annotations are unavailable. Therefore, the research attention has been paid to the \emph{self-supervised} pretraining paradigm, to construct the surrogate task by exploring the intrinsic structure within unlabeled molecules. 
A variety of self-supervised works have been proposed, ranging from generative-based models~\cite{kipf2016variational, DBLP:conf/iclr/HuLGZLPL20, DBLP:conf/kdd/HuDWCS20} to contrastive learning~\cite{DBLP:conf/iclr/SunHV020, DBLP:conf/iclr/VelickovicFHLBH19, DBLP:conf/nips/YouCSCWS20, DBLP:conf/icml/YouCSW21, DBLP:journals/corr/abs-2106-04509}. 

In many applications, using and analyzing 3D geometry is crucial and even indispensable; for instance, we need to process 3D coordinates for energy prediction in MD simulation or protein-ligand docking. 
Owing to the fast development in data acquisition, it is now convenient to access large-scale unlabeled molecules with rich 3D conformations~\cite{axelrod2022geom}. It would be quite exciting if we could develop techniques to obtain pretrained models from these unlabeled molecules for 3D tasks with limited data. Nevertheless, existing self-supervised methods~\cite{DBLP:conf/iclr/HuLGZLPL20, DBLP:conf/nips/YouCSCWS20, DBLP:conf/nips/RongBXX0HH20} are weak in leveraging the 3D geometry information. First, 
from the input side, the backbone models they pretrain can only process the input of 2D molecules without the consideration of 3D coordinates. As demonstrated by~\citet{schutt2017schnet}, certain molecular properties (\emph{e.g.} potential energy) are closely related to the 3D structure with which they can be better predicted. Second, for the output side, their pretraining tasks are not 3D-aware, making the knowledge they discover less generalizable in 3D space. Recently, the study by~\citet{DBLP:journals/corr/abs-2110-07728} proposes to impose the 3D information for pretraining; however, its goal is still limited to enhancing 2D models for 2D tasks. 

In this paper,  we investigate 3D molecular pretraining in a complete and novel sense: using 3D backbone models, designing 3D-aware pretraining tasks, and targeting 3D downstream evaluations. However, this is not trivial by any means. The challenges mainly stem from how to maintain the symmetry of our biological world---rotating/translating the 3D conformation of a molecule does not change the law of its behavior.
Mathematically, we should make the backbone E(3)-equivariant, and the pretraining loss E(3)-invariant, where the group E(3) collects the transformations of rotations, reflections, and translations~\cite{satorras2021n}. Unfortunately, typical GNNs~\cite{gilmer2017neural,xu2018powerful} and 3D losses based on Euclidean distance~\cite{luo2020differentiable} do not satisfy such constraints.

To address the above challenges, this paper makes the following contributions: 
\textbf{1.}
We propose an energy-based representation model that predicts E(3)-equivariant force for each atom in the input 3D molecule, by leveraging recently-proposed equivariant GNNs~\cite{satorras2021n, schutt2017schnet, tholke2021equivariant} as the building block. 
\textbf{2.}
To pretrain this model, we formulate a physics-inspired node-level force prediction task, which is further translated to a 3D position denoising loss in an equivalent way. More importantly, we develop a novel denoising scheme with the aid of the proposed Riemann-Gaussian distribution, to ensure the E(3)-invariance of the pretraining task.
\textbf{3.}
We additionally design a graph-level surrogate task on 3D molecules, in line with the observation from traditional 2D methods~\cite{DBLP:conf/nips/RongBXX0HH20} that performing node-level and graph-level tasks jointly is able to promote the eventual performance. For this purpose, we teach the model to identify the noise scale of the input tuple consisting of a clean sample and a noisy one. 
The above ingredients are unified in a general pretraining framework: energy-motivated 3D Equivariant Molecular Graph Pretraining (3D-EMGP).

We pretrain our model on a large-scale dataset with 3D conformations: GEOM-QM9~\cite{axelrod2022geom}, and then evaluate its performance on the two popular 3D tasks: MD17~\cite{chmiela2017machine} and QM9~\cite{ramakrishnan2014quantum}. Extensive experiments demonstrate that our model outperforms state-of-the-art 2D approaches, even though their inputs are augmented with 3D coordinates for fair comparisons. We also inspect how the performance changes if we replace the components of our architecture with other implementations. The results do support the optimal choice of our design.

\section{Related Works}

\paragraph{Self-supervised molecular pretraining} Self-supervised learning has been well developed in the field of molecular graph representation learning. Many pretraining tasks have been proposed to extract information from large-scale molecular dataset, mainly divided into three categories: \emph{contrastive}, \emph{generative} and \emph{predictive}. Contrastive methods aim to maximize the mutual information between different views of the same graph~\cite{DBLP:conf/iclr/SunHV020, DBLP:conf/iclr/VelickovicFHLBH19, DBLP:conf/nips/YouCSCWS20, DBLP:conf/icml/YouCSW21, DBLP:journals/corr/abs-2106-04509}, while generative methods focus on reconstructing the information from different levels of the 2D topological structure~\cite{kipf2016variational, DBLP:conf/iclr/HuLGZLPL20, DBLP:conf/kdd/HuDWCS20}. As for predictive methods, they learn the molecule representation by predicting the pseudo-labels created from the input graphs. For example, GROVER~\cite{DBLP:conf/nips/RongBXX0HH20} proposes to classify the subgraph structure and predict the existence of specific motifs, which leverages domain knowledge into molecule pretraining. However, all the above methods mainly focus on pretraining 2D GNNs without 3D information. More recently, several methods propose to tackle 3D graphs, including 3D Infomax~\cite{DBLP:journals/corr/abs-2110-04126} that maximizes the mutual information between the representations encoded from a 2D and a 3D model, and GraphMVP~\cite{DBLP:journals/corr/abs-2110-07728} which uses contrastive and generative methods to incorporate 3D information into the 2D model. However, the motivation of these two methods remains to benefit 2D models with 3D information. On the contrary, this paper attempts to pretrain 3D models via 3D objectives with the usage for 3D downstream tasks. Besides, ChemRL-GEM~\cite{DBLP:journals/corr/abs-2106-06130} predicts bond length, bond angle prediction, and atom distance for 3D pretraining. While they only employ pairwise or triplet invariant tasks, we formulate both invariant and equivariant pretraining objectives in accordance with our proposed energy-based molecular representation model. \citet{zhu2022unified} formulates the pretraining task as translation between 2D and 3D views, distinct from our goal of realizing pretraining mainly based on 3D conformations. \citet{zhou2022uni} applies masked position denoising, similar to the implemented baseline PosPred in our experiments.

\paragraph{Equivariant graph neural networks} To better analyze the physical and chemical properties of molecules, many researchers regard the molecules as geometric graphs, which additionally assign 3D coordinates on each atom apart from the 2D topological information. Geometric graphs present rotational, translational, and/or reflectional symmetry, as the properties are invariant and the dynamic processes are equivariant to the E(3) or SE(3) transformations in 3D space. To introduce this inductive bias, geometrically equivariant graph neural networks have been proposed to model the geometric graphs. According to~\citet{han2022geometrically}, current 3D GNNs achieve equivariance mainly in three ways: extracting irreducible representations~\cite{thomas2018tensor, DBLP:conf/nips/FuchsW0W20}, utilizing group regular representations~\cite{DBLP:conf/icml/FinziSIW20, hutchinson2021lietransformer} or transforming the 3D vectors into invariant scalars~\cite{schutt2017schnet, satorras2021n, tholke2021equivariant}. Previous works showcase the superiority of equivariant models on several 3D molecular tasks~\cite{tholke2021equivariant, liu2022spherical}, and our goal is to further improve the performance of 3D models via equivariant pretraining on large-scale 3D datasets. 
 
\section{Method}

\begin{figure*}[t]
    \centering
    \includegraphics[width=0.90\textwidth]{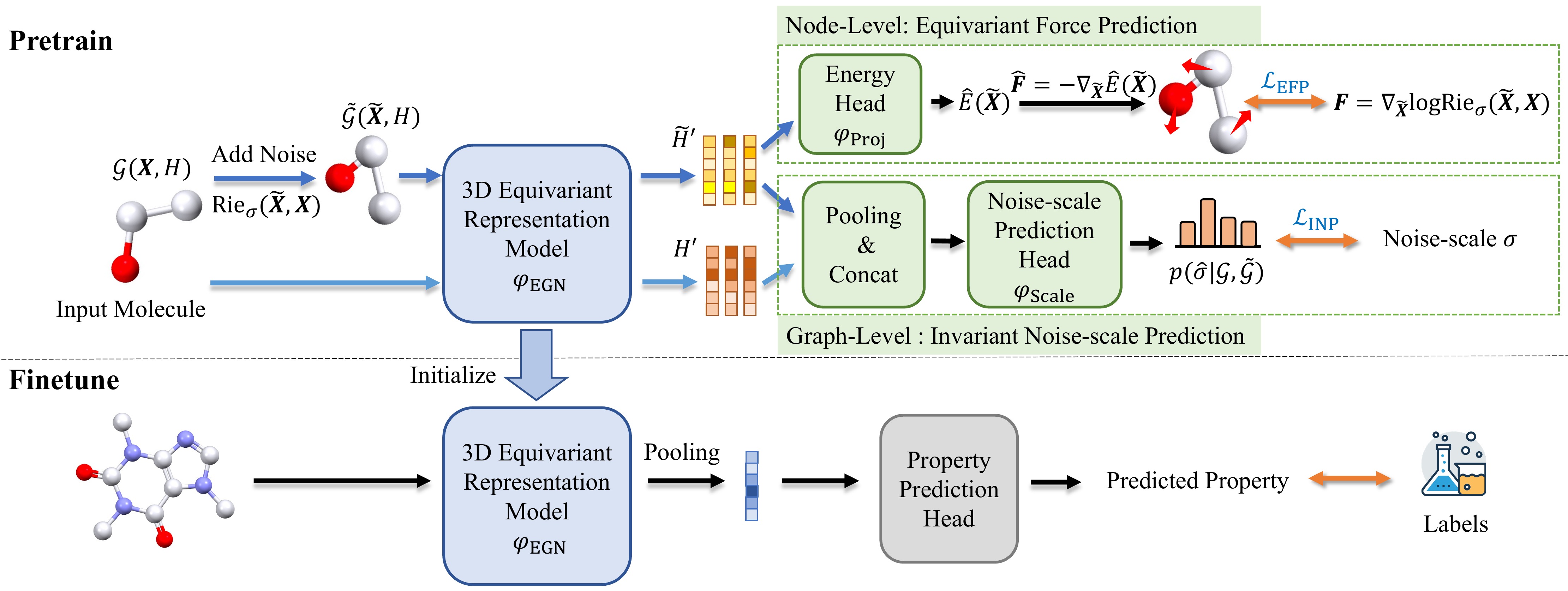}
    \caption{An overview of our 3D-EMGP. It consists of two tasks: node-level equivaraint force prediction and graph-level invariant noise scale prediction. $\mX,\tilde{\mX}$ are the original and perturbed coordinates. $H$ is the input node feature and $H',\tilde{H}'$ are the output features of the original and perturbed graph. $\text{Rie}_{\sigma}(\tilde{\mX}\mid\mX)$ is the proposed Riemann-Gaussian distribution in Eq.~(\ref{eq:rg}).}
    \label{fig:motivation}
    \vskip -0.2in
\end{figure*}


\subsection{Energy-based Molecular Modeling}
\label{sec:ebm}

In general, a molecule consisting of $N$ atoms can be modeled as a \emph{molecular graph} $\gG=(\gV, \gE)$, where $\gV$ is the set of nodes (atoms) and $\gE$ is the set of edges modeling the connectivity, \emph{e.g.}, bond connection or spatial proximity. Each atom is assigned a node feature $h_i,1\leq i\leq N$, representing the properties of the atom, such as atom type. In this work, we elaborate on the 3D information of a molecule, that is, apart from the node attribute $h_i\in\sR^{m}$ for atom $i$, we extra consider the 3D coordinate $\vx_i\in\sR^3$. We denote the configurations of all nodes as $\mX=[\vx_1, \vx_2, \cdots, \vx_N]\in\R^{3\times N}$, and similarly the node features as $H=[h_1, h_2, \cdots, h_N]\in\R^{m\times N}$. Our goal is to pretrain a capable GNN model $\varphi$ that can be generally applied to different downstream 3D tasks, which is depicted as $\varphi(\mX,H,\gE)$.  We hereafter omit the input of $\gE$ for conciseness, unless otherwise specified.

Unlike conventional 2D molecular graph pretraining, we are now provided with vital 3D information, making it possible to leverage the rich geometric context in a unified framework. However, involving 3D conformation is not free of difficulty, and one constraint we should enforce is to make $\varphi$ equivariant for vector outputs, and invariant for scalar outputs. The notion of equivariance/invariance is defined below.
\begin{definition}
\label{De:equ}
A GNN model $\varphi$ is call E(3)-equivariant, if for any transformation $g\in\text{E(3)}$, $\varphi(g\cdot\mX, H)=g\cdot\varphi(\mX,H)$; and it is called E(3)-invariant if $\varphi(g\cdot\mX, H)=\varphi(\mX,H)$. 
\end{definition}

In Definition~\ref{De:equ}, the group action $g\cdot\mX$ is implemented as matrix multiplication $\mO\mX$ for orthogonal transformation $\mO\in\R^{3\times3}$ 
and addition $\mX+\vt$ for translation $\vt\in\R^3$. Basically, for an equivariant function, the output will translate/rotate/reflect in the same way as the input, while for the invariant case, the output stays unchanged whatever group action conducted on the input. Equivariance/invariance is so essential that it characterizes the symmetry of the 3D biochemistry: rotating or translating a molecule will never change its potential energy. There is a variety of E(3)-equivariant GNNs~\cite{han2022geometrically} that can be utilized as our backbone.  By choosing an E(3)-equivariant GNN, $\varphi$ is instantiated as $\varphi_{\text{EGN}}$. Notably, $\varphi_{\text{EGN}}$ is also permutation equivariant regarding the order of the columns in $\mX$ and $H$. 

We now introduce our idea of how to tackle 3D molecular pretraining in a more domain-knowledge-reliable manner. As well studied in biochemistry, the interaction between atoms in 3D space is captured by the forces and potentials, depending on the positions of the atoms, \emph{i.e.}, the molecular conformation. This connection inspires us to incorporate the concept of energy and force into our representation model, making room for designing unsupervised pretraining objectives with 3D geometric information. In light of this, we introduce a representation model that jointly takes into account both energy and force. We denote the energy of a molecule as $E\in\sR$ and the resultant interaction force exerting on atom $i$ as $\vf_i\in\sR^3, 1\leq i\leq N$. The forces over all atoms are collected in the matrix $\mF\in\R^{3\times N}$.
Clearly, $E$ is an invariant graph-level scalar, while $\mF$ consists of equivariant node-level vectors, in terms of the input transformation.

To derive $E$ and $\mF$ by the equivariant model $\varphi_{\text{EGN}}$, we first obtain a node-level 
representation in the latent space:
\begin{align}
\label{eq:H}
    H' = \varphi_{\text{EGN}}\left(\mX, H, \gE\right),
\end{align}
where $H'\in\sR^{k\times N}$ is an invariant representation. Let $\hat{E}, \hat{\mF}$ denote the predicted energy and force. We yield the graph-level energy of the molecule via a graph pooling operation:
\begin{align}
\label{eq:E}
    \hat{E}(\mX) = \varphi_{\text{Proj}}\left(\sum_{i=1}^N h'_i \right),
\end{align}
where $h'_i$ is the $i$-th column of $H'$, $\varphi_{\text{Proj}}: \sR^{k}\mapsto\sR$ is the projection head, realized by a Multi-Layer Perceptron (MLP). Essentially, force corresponds to the direction that causes potential energy to decrease, which implies
\begin{align}
\label{eq:f}
    \hat{\mF}(\mX) &=-\lim_{\Delta \mX\rightarrow 0}\frac{\Delta \hat{E}}{\Delta \mX}= -\nabla_{\mX} \hat{E}(\mX),
\end{align}
where $\nabla_{\mX}$ denotes the gradient \emph{w.r.t.} $\mX$. It is easy to verify that $\hat{E}$ is invariant and $\hat{\mF}$ is equivariant\footnote{More precisely, $\hat{\mF}$ is orthogonality-equivariant but translation-invariant.}.

We attempt to design the first proxy task by leveraging the predicted force $\hat{\mF}$ to fit the force implied in the molecule. However, there is usually no force label provided in the pretraining dataset. Fortunately, we can fulfill this purpose from the lens of node-level denoising---we first add noise to each node's coordinate and then estimate the virtual force that pulls the noisy coordinate back to the original one. We will provide the details in~\textsection~\ref{sec:efp}. Upon the denoising process in the first pretraining task, we further construct a graph-level pretraining objective in an intuitive sense: a desirable model should be able to tell how much noise is added to its input. The detailed strategy is presented in~\textsection~\ref{sec:inp}.

\subsection{Node-Level: Equivariant Force Prediction}
\label{sec:efp}
We start by designing a pretraining objective that well characterizes the 3D geometric information. To fulfill this goal, we resort to the force $\hat{\mF}$ produced by our energy-based molecular model. Yet, it is challenging and non-straightforward to provide a reasonable instantiation, since there is usually no available ground truth force labels in large-scale pretraining datasets. Interestingly, we find a way through by establishing a connection between $\mF$ and the distribution of the conformations $\mX$, and manage to provide the predicted $\hat{\mF}$ with pseudo supervision. The connection is identified by first assuming a Boltzmann energy distribution~\cite{boltzmann1868studien} over the training conformers $\sG$:
\begin{align}
    p\left(\mX\right) &= \frac{1}{Z}\exp\left(-\frac{E(\mX)}{kT} \right),
\end{align}
where $E$ denotes the assumed energy, $k$ is the Boltzmann constant, $T$ is the temperature, and $Z$ is the normalization. By taking the logarithm and computing the gradient over $\mX$, we acquire
\begin{align}
\label{eq:connection}
\nabla_{\mX}  \log p(\mX) \propto -\nabla_{\mX}E(\mX)\coloneqq{\mF}.
\end{align}
It is thus applicable to approach $\hat{\mF}$ by the first term in Eq.~(\ref{eq:connection}), serving as a pseudo force. In light of this, we formulate an equivariant (pseudo) force prediction (EFP) objective over training data $\sG$: 
\begin{align}
\label{eq:efp}
    \gL_{\text{EFP}}=\E_{\gG\sim\sG}\big[  \|\hat{\mF}(\mX)-\nabla_{\mX}\log p(\mX) \|_F^2\big],
\end{align}
where $\hat{\mF}$ is produced by the model $\varphi_{\text{EGN}}$ (Eq.~(\ref{eq:H}-\ref{eq:f})), $\|\cdot\|_F$ computes the Frobenius norm.

Nevertheless, we still have no idea of what the exact form of the data density $p(\mX)$ looks like. Hence it is infeasible to directly apply the loss Eq.~(\ref{eq:efp}). Fortunately, the work by~\citet{vincent2011connection} draws a promising conclusion that Eq.~(\ref{eq:efp}) can be equivalently translated to a denoising problem which is tractable to solve (see Appendix for details). In a nutshell, we instead sample a noisy sample $\tilde{\mX}$ from $\mX$ according to a certain conditional distribution, \emph{i.e.}, $\tilde{\mX}\sim p(\tilde{\mX}\mid\mX)$. Then we substitute the noisy sample into the model $\varphi_{\text{EGN}}$  and perform the replacement of Eq.~(\ref{eq:efp}) by
\begin{align}
\label{eq:new-efp}
    \gL_{\text{EFP-DN}}=\E_{\gG\sim\sG, \tilde{\mX}\sim p(\tilde{\mX}\mid\mX)} \big[  \|\hat{\mF}(\tilde{\mX})-\nabla_{\tilde{\mX}}\log p(\tilde{\mX}\mid\mX) \|_F^2\big].
\end{align}


We now discuss the formulation of the conditional probability $p(\tilde{\mX}\mid\mX)$. 
Different from the traditional denoising process on images or other Euclidean data~\cite{song2020improved, shi2021learning, luo2021predicting}, in our case when considering the 3D geometry, the noise we add should be geometry-aware other than conformation-aware. In other words, $p(\tilde{\mX}\mid\mX)$ should be \emph{doubly E(3)-invariant}, namely,
\begin{align}
    \label{eq:double}
    p(g_1\cdot\tilde{\mX}\mid g_2\cdot\mX)=p(\tilde{\mX}\mid\mX), \forall g_1, g_2\in\text{E}(3),
\end{align}
with the illustrations provided in Fig.~\ref{fig:riemann}.
This is consistent with our understanding: the behavior of molecules with the same geometry should be independent of different conformations. For example, when we rotate the sample $\tilde{\mX}$, the property of $p(\tilde{\mX}\mid\mX)$ by Eq.~(\ref{eq:double}) ensures the loss in Eq.~(\ref{eq:new-efp}) to be unchanged, which is what we desire; similarly, conducting rotation on $\mX$ should also obey the same rule. A conventional choice of $p(\tilde{\mX}\mid\mX)$ is utilizing the standard Gaussian with noise scale $\sigma$: $p(\tilde{\mX}\mid\mX) =  \gN(\mX, \sigma^2\mI)$. This naive form fails to meet the doubly E(3)-invariant property in Eq.~(\ref{eq:double}), which could cause mistaken supervision in Eq.~(\ref{eq:new-efp}). To show this, we derive  $\nabla_{\tilde{\mX}}\log p(\tilde{\mX}\mid\mX)=-\frac{\tilde{\mX}-\mX}{\sigma^2}$ as the force target; if we set $\tilde{\mX}=\mR\mX$ for some rotation matrix $\mR \neq \mI$, then we have $\nabla_{\tilde{\mX}}\log p(\tilde{\mX}\mid\mX) = -\frac{1}{\sigma^2}(\mR - \mI)\mX \neq 0$, which, however, does not align with the true fact that the force between $\tilde{\mX} = \mR\mX$ and $\mX$ should be zero since they share the same geometry. 

\begin{figure}[t!]
    \centering
    \includegraphics[width=0.42\textwidth]{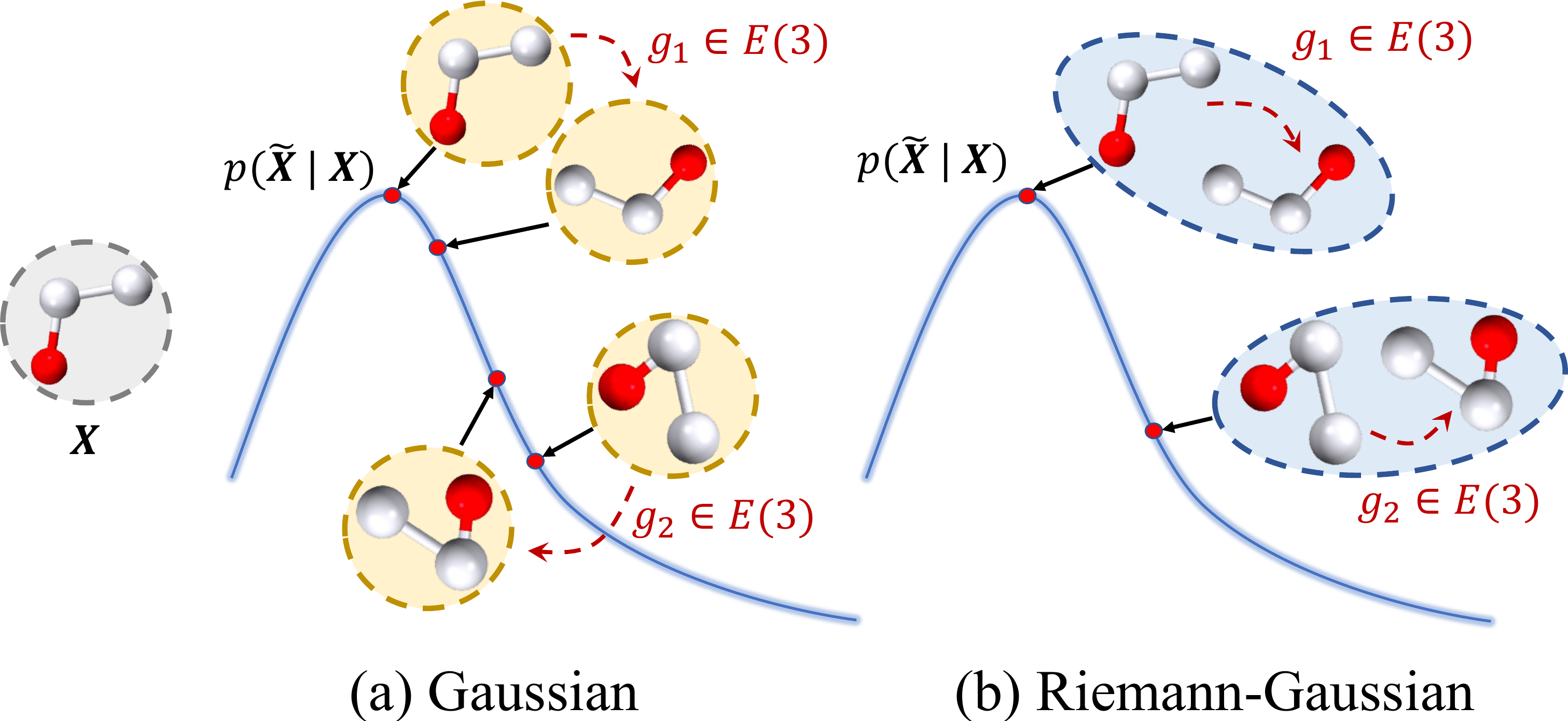}
    \vskip -0.05in
    \caption{Illustration of different distributions. For typical Gaussian, a data point (in dashed circle) is a specific conformation $\mX$, while for Riemann Gaussian, it is a set of conformations with the same geometry $[\mX]\coloneqq\{g\cdot\mX\mid g\in \text{E}(3)\}$.}
    \vskip -0.2in
    \label{fig:riemann}
\end{figure}

To devise the form with the symmetry in Eq.~(\ref{eq:double}), we instead resort to Riemann-Gaussian  distribution~\cite{said2017riemannian} defined as follows: 
\begin{align}
\label{eq:rg}
    p_{\sigma}(\tilde{\mX}\mid\mX)=\text{Rie}_{\sigma}(\tilde{\mX}\mid\mX) \coloneqq \frac{1}{Z(\sigma)}\exp\left(-\frac{d^2(\tilde{\mX},\mX)}{4\sigma^2} \right),
\end{align}
where $Z(\sigma)$ is the normalization term, and $d$ is the metric that calculates the difference between $\tilde{\mX}$ and $\mX$. Riemann-Gaussian is a generalization version of typical Gaussian, by choosing various distances $d$ beyond the Euclidean metric. Here, to pursue the constraint in Eq.~(\ref{eq:double}), we propose to use
\begin{align}
\label{eq:distance}
    d(\mX_1, \mX_2) =\| \mY_1^\top\mY_1 - \mY_2^\top\mY_2\|_F,
\end{align}
where $\mY = \mX - \bm{\mu}(\mX)$ re-positions $\mX$ towards zero mean ($\bm{\mu}(\mX)$ denotes the mean of the columns in $\mX$). One clear benefit is that the distance function $d$ defined in Eq.~(\ref{eq:distance}) satisfies the doubly E(3)-invariance constraint in Eq.~(\ref{eq:double}). Note that $d$ is also permutation invariant with regard to the order of the columns of $\tilde{\mX}$ and $\mX$. We summarize the above discussion as a formal proposition as follows.
\begin{proposition}
For Riemann-Gaussian $\text{Rie}_{\sigma}(\tilde{\mX}\mid\mX)$ defined in Eq.~(\ref{eq:rg}), it is doubly E(3)-invariant as per Eq.~(\ref{eq:double}). 
\end{proposition}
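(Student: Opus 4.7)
The plan is to reduce everything to showing that the distance $d$ from Eq.~(\ref{eq:distance}) is doubly E(3)-invariant, i.e.\ $d(g_1 \cdot \mX_1, g_2 \cdot \mX_2) = d(\mX_1, \mX_2)$ for all $g_1, g_2 \in \text{E}(3)$. Once that is in hand, the exponential $\exp(-d^2/(4\sigma^2))$ in Eq.~(\ref{eq:rg}) is doubly invariant by inspection, and the density follows provided the normalizer $Z(\sigma)$ does not depend on the conditioning variable.

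The first step I would take is to track how the centered matrix $\mY(\mX) = \mX - \vmu(\mX) \vone^\top$ transforms under the E(3) action. Under a pure translation $\mX \mapsto \mX + \vt \vone^\top$, the empirical mean shifts by $\vt$ as well, so the two translations cancel and $\mY(\mX)$ is unchanged; under an orthogonal action $\mX \mapsto \mO\mX$ with $\mO^\top \mO = \mI$, linearity of $\vmu$ gives $\vmu(\mO\mX) = \mO \vmu(\mX)$, and a one-line calculation yields $\mY(\mO\mX) = \mO \mY(\mX)$. Combining, for any $g = (\mO, \vt) \in \text{E}(3)$ we have $\mY(g \cdot \mX) = \mO \mY(\mX)$.

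The second step is to push this through the Gram matrix $\mY^\top \mY$. Since $(\mO \mY)^\top (\mO \mY) = \mY^\top \mO^\top \mO \mY = \mY^\top \mY$, the map $\mX \mapsto \mY(\mX)^\top \mY(\mX)$ is fully E(3)-invariant: the translation part disappears in the centering, and the orthogonal part cancels in the Gram product. Applying this to both arguments separately gives $d(g_1 \cdot \mX_1, g_2 \cdot \mX_2) = \| \mY_1^\top \mY_1 - \mY_2^\top \mY_2 \|_F = d(\mX_1, \mX_2)$, as required.

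The third step is to conclude for the full density. Substituting $g_1 \cdot \tilde{\mX}$ and $g_2 \cdot \mX$ into Eq.~(\ref{eq:rg}), the exponent is unchanged by the previous step. The only thing left to check is that $Z(\sigma) = \int \exp(-d^2(\tilde{\mX}, \mX)/(4\sigma^2)) \, d\tilde{\mX}$ does not change when $\mX$ is replaced by $g_2 \cdot \mX$; this follows from a change of variables $\tilde{\mX} \mapsto g_2^{-1} \cdot \tilde{\mX}$, using that the Lebesgue measure on $\R^{3 \times N}$ is invariant under E(3) and that the integrand is already a function of $d^2$ alone. I expect the only genuine subtlety to be the treatment of $Z(\sigma)$ (strictly speaking the integral over all of $\R^{3 \times N}$ is infinite because $d$ is translation-invariant, so one has to interpret $Z$ either on the centered slice $\{\tilde{\mX} : \vmu(\tilde{\mX}) = \vzero\}$ or on the quotient by translations); on either interpretation the change-of-variables argument goes through and the doubly E(3)-invariant property in Eq.~(\ref{eq:double}) is established.
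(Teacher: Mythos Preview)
Your proof is correct and follows essentially the same route as the paper: show that centering kills translations, that the Gram matrix $\mY^\top\mY$ absorbs the orthogonal factor via $\mO^\top\mO=\mI$, and then substitute into the density. Your extra treatment of the normalizer $Z(\sigma)$ (including the translation-invariance subtlety) is more careful than the paper, which simply takes $Z(\sigma)$ to depend on $\sigma$ alone without comment.
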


The gradient of Riemann-Gaussian is calculated as follows, with the detailed derivations in Appendix A.2:
\begin{align}
\label{eq:gradient}
     \nabla_{\tilde{\mX}}\log p_{\sigma}(\tilde{\mX}|\mX) = -\frac{1}{\sigma^2}\big[(\tilde{\mY}\tilde{\mY}^\top) \tilde{\mY}-(\tilde{\mY}\mY^\top) \mY \big].
\end{align}
Meanwhile, as proved in Appendix, the calculation in Eq.~(\ref{eq:gradient}) is of the complexity $\gO(N)$, making it computationally efficient for even large-scale molecules.

The last remaining recipe is how to sample $\tilde{\mX}$ from $\mX$ according to the Riemann-Gaussian distribution to provide the input to Eq.~(\ref{eq:new-efp}). It is non-straightforward to accomplish this goal, since the normalization term $Z(\sigma)$ of Riemann-Gaussian is unknown. Here we resort to Langevin dynamics~\cite{schlick2010molecular} which is widely used for approximated sampling when only non-normalized probability density is given. We provide the details in Appendix A.6. Furthermore, to better explore the conformation space, we employ a sampling scheme with multiple levels of noise~\cite{song2020improved}. Particularly, let $\{\sigma_l\}_{l=1}^L$ be a series of noises with different scales. The final EFP loss is given by
\begin{align}
\nonumber
     \gL_{\text{EFP-Final}}&=\E_{\gG\sim\sG, l\sim U(1, L), \tilde{\mX}\sim p_{\sigma_l}(\tilde{\mX}|\mX)}\\ &\big[\sigma_l^2 \|\frac{1}{\sigma_l}\hat{\mF}(\tilde{\mX}) -\frac{1}{\alpha}  \nabla_{\tilde{\mX}}\log p_{\sigma_l}(\tilde{\mX}|\mX) \|_F^2\big],
\end{align}
where $U(1, L)$ is the discrete uniform distribution, and $\nabla_{\tilde{\mX}}\log p_{\sigma_l}(\tilde{\mX}|\mX)$ is provided by Eq.~(\ref{eq:gradient}). We apply a weighting coefficient $\sigma_l^2$ for different noise scales and scale the predicted forces by $1/\sigma_l$ as suggested by~\citet{song2020improved, shi2021learning}. We also add $\alpha$ as a normalization for numerical stability of the inner product; its value is given by $\alpha=(\|\tilde{\mY}\tilde{\mY}^\top\|_F + \|\tilde{\mY}\mY^\top\|_F)/2$ in our experiments. It is proved in Appendix A.5 that the normalization term $\alpha$ also satisfies the doubly E(3)-invariant property.

\begin{table*}[t!]
  \centering
  \resizebox{0.93\linewidth}{!}{
\begin{tabular}{lcccccccc|c}
\toprule
Force   & Aspirin    & Benzene    & Ethanol    & Malon. & Naph. & Salicylic & Toluene    & Uracil     & Average \\
\midrule
Base~\cite{satorras2021n} & 0.3885     & 0.1861     & 0.0599     & 0.1464     & 0.3310     & 0.2683     & 0.1563     & 0.1323     & 0.2086  \\
\midrule
AttrMask~\cite{DBLP:conf/iclr/HuLGZLPL20} & 0.3643     & 0.2277     & 0.0567     & 0.1456     & 0.1773     & 0.3890     & 0.1093     & 0.1560     & 0.2032  \\
EdgePred~\cite{hamilton2017inductive} & 0.4707     & 0.2036     & 0.0743     & 0.1268     & 0.2310     & 0.3400     & 0.1854     & 0.1933     & 0.2281  \\
GPT-GNN~\cite{DBLP:conf/kdd/HuDWCS20} & 0.4278     & 0.2492     & 0.0703     & 0.1484     & 0.2080     & 0.3609     & 0.1541     & 0.2219     & 0.2301  \\
InfoGraph~\cite{DBLP:conf/iclr/SunHV020} & 0.6578     & 0.2743     & 0.1257     & 0.2647     & 0.2860     & 0.5793     & 0.3821     & 0.4238     & 0.3742  \\
GCC~\cite{qiu2020gcc} & 0.3996 &	0.2346 &	0.0662 & 	0.1484 & 	0.2798 & 	0.4263 & 	0.3378 & 	0.2369 & 	0.2662 \\
GraphCL~\cite{DBLP:conf/nips/YouCSCWS20} & \underline{0.2333}     & \underline{0.1845}     & \underline{0.0503}     & 0.0852     & \underline{0.0966}     & \underline{0.1587}     & \underline{0.0725}     & \underline{0.1167}     & \underline{0.1247}  \\
JOAO~\cite{you2021graph} & 0.3646 &	0.2331 & 	0.0642 & 	0.1029 & 	0.2017 & 	0.3020 & 	0.1322 &	0.1683 & 	0.1961 \\
JOAOv2~\cite{you2021graph} & 0.3447 &	0.2198 & 	0.0568 & 	0.0981 & 	0.1889 & 	0.2753 & 	0.1001 &	0.1850 & 	0.1836 \\
GraphMVP~\cite{DBLP:journals/corr/abs-2110-07728} & 0.3198     & 0.2800     & 0.0629     & \underline{0.0788}     & 0.2350     & 0.2641     & 0.0903     & 0.1339     & 0.1831  \\
3D Infomax~\cite{DBLP:journals/corr/abs-2110-04126} & 0.4592 & 0.1914 & 0.0705 & 0.1263 & 0.2642 & 0.3401 & 0.2032 & 0.1836 & 0.2298 \\
GEM~\cite{DBLP:journals/corr/abs-2106-06130} & 0.3994 & 0.2105 & 0.0871 & 0.1161 & 0.1489 & 0.2344 & 0.1193 & 0.1827 & 0.1873 \\
PosPred    & 0.3050     & 0.2023     & 0.0519     & 0.0937     & 0.0971     & 0.2481     & 0.0945     & 0.1270     & 0.1525  \\
\midrule
3D-EMGP & \textbf{0.1560} & \textbf{0.1648} & \textbf{0.0389} & \textbf{0.0737} & \textbf{0.0829} & \textbf{0.1187} & \textbf{0.0619} & \textbf{0.0773} & \textbf{0.0968} \\
\bottomrule
\end{tabular}%
  }
    \vskip -0.05in
    \caption{MAE (lower is better) on MD17 force prediction. All methods share the same backbone as Base.}
  \label{tab:md17}%
      \vskip -0.1in
\end{table*}%

\subsection{Graph-Level: Invariant Noise-scale Prediction}
\label{sec:inp}

In the last subsection, we have constructed a node-level pretraining objective for local force prediction. To further discover global patterns within the input data, this subsection presents how to design a graph-level self-supervised task. Previous studies~\cite{DBLP:conf/iclr/HuLGZLPL20, DBLP:conf/nips/RongBXX0HH20} have revealed for 2D molecules that the node- and graph-level tasks are able to promote each other. Here, we investigate this idea on 3D geometric graphs. 

Recalling that $\tilde{\mX}$ is distributed by $p_{\sigma_l}(\tilde{\mX}|\mX)$, it is expected that a well-behaved model should identify how much the perturbed sample deviates from the original data. Such intuition inspires us to set up a classification problem as noise scale prediction. 
Specifically, our $\varphi_\text{EGN}$ shares the same EGN backbone as in Eq.~(\ref{eq:H}), yielding exactly the same invariant node- and graph-level embeddings. For the input $\mX$ and $\tilde{\mX}$, we first obtain their graph-level embedding $\vu$ and $\tilde{\vu}$ via $\varphi_{\text{EGN}}$, respectively. Instead of using the scalar projection head $\varphi_{\text{Proj}}$ for energy computation, we employ a classification head $\varphi_{\text{Scale}}$ that takes as input a concatenation of the graph-level embeddings of the original conformation $\vu$ and the perturbed conformation $\tilde{\vu}$. The output of $\varphi_{\text{Scale}}$ is the logits $\vp\in\sR^L = \varphi_{\text{Scale}}\left(\vu \| \tilde{\vu} \right)$, where $L$ is the number of noise levels. Finally, a cross-entropy loss is computed between the logits and the label, which is the sampled noise level for the current input. The objective of the invariant noise-scale prediction task is thus given by
\begin{align}
    \gL_{\text{INP}} =\E_{\gG\sim\sG, l\sim U(1, L),\tilde{\mX}\sim  p_{\sigma_l}(\tilde{\mX}|\mX)}\big[ \gL_{\text{CE}}\left(\sI[l], \vp \right)\big],
\end{align}
where $\gL_\text{CE}$ is the cross-entropy loss and $\sI[l]$ is the one-hot encoding of $l$.

Our overall training objective, as illustrated in Fig.~\ref{fig:motivation}, is a combination of both node-level equivariant force prediction loss and graph-level invariant noise scale prediction loss:
\begin{align}
    \gL = \lambda_1\gL_{\text{EFP-Final}} + \lambda_2\gL_{\text{INP}},
\end{align}
where $\lambda_1$ and $\lambda_2$ are the balancing coefficients.

\section{Experiments}

\begin{table*}[t!]
  \centering
  \resizebox{0.93\linewidth}{!}{
\begin{tabular}{lcccccccccccc}
\toprule
   & $\alpha$      & $\Delta_{\epsilon}$        & $\epsilon_{\text{HOMO}}$       &  $\epsilon_{\text{LUMO}}$      & $\mu$         & $C_\nu$         & $G$          & $H$          & $R^2$         & $U$          & $U_0$         & ZPVE \\
\midrule
Base~\cite{satorras2021n} & 0.070       & 49.9       & 28.0         & 24.3       & 0.031      & 0.031      & \underline{10.1}       & 10.9       & \textbf{\underline{0.067}} & 9.7        & \underline{9.3}        & 1.51 \\
\midrule
AttrMask~\cite{DBLP:conf/iclr/HuLGZLPL20} & 0.072      & 50.0         & 31.3       & 37.8       & \underline{0.020}       & 0.062      & 11.2       & 11.4       & 0.423      & 10.8       & 10.7       & 1.90\\
EdgePred~\cite{hamilton2017inductive} & 0.086      & 58.2       & 37.4       & 31.9       & 0.039      & 0.038      & 14.5       & 14.8       & 0.112      & 14.2       & 14.7       & 1.81 \\
GPT-GNN~\cite{DBLP:conf/kdd/HuDWCS20} & 0.103      & 54.1       & 35.7       & 28.8       & 0.039      & 0.032      & 12.2       & 14.8       & 0.158      & 24.8       & 12.0         & 1.75 \\
InfoGraph~\cite{DBLP:conf/iclr/SunHV020} & 0.099      & 72.2       & 48.1       & 38.1       & 0.041      & 0.030       & 16.5       & 14.5       & 0.114      & 14.9       & 16.4       & 1.69 \\
GCC~\cite{qiu2020gcc} & 0.085 &	57.7 &	37.7 &	32.3 &	0.041 &	0.034 &	12.8 &	14.5 &	0.104 &	13.2 &	13.1 &	1.66 \\
GraphCL~\cite{DBLP:conf/nips/YouCSCWS20} & \underline{0.066}      & 45.5       & 26.8       & 22.9       & 0.027      & \underline{0.028}      & 10.2       & 9.6       & 0.095      & \underline{9.7}        & 9.6        & \underline{1.42} \\
JOAO~\cite{you2021graph} & 0.068 &	46.0 & 	28.2 & 	22.8 &	0.028 &	0.030 &	10.5 &	10.0 &	0.076 &	9.9 &	10.1 &	1.48 \\
JOAOv2~\cite{you2021graph} & \underline{0.066} &	45.0 & 	27.8 & 	22.2 &	0.027 &	\underline{0.028} &	9.9 & 	\underline{9.2} &	0.087 &	9.8 &	9.5	& 1.43 \\
GraphMVP~\cite{DBLP:journals/corr/abs-2110-07728} & 0.070       & 46.9       & 28.5       & 26.3       & 0.031      & 0.033      & 11.2       & 10.4       & 0.082      & 10.3       & 10.2       & 1.63 \\
3D Infomax~\cite{DBLP:journals/corr/abs-2110-04126} & 0.075 & 48.8  & 29.8  & 25.7  & 0.034 & 0.033 & 13.0  & 12.4  & 0.122 & 12.5  & 12.7  & 1.67 \\
GEM~\cite{DBLP:journals/corr/abs-2106-06130}   & 0.081 & 52.1  & 33.8  & 27.7  & 0.034 & 0.035 & \multicolumn{1}{r}{13.2} & 13.3  & 0.089 & 12.6  & 13.4  & 1.73 \\
PosPred    & 0.067      & \underline{40.6}       & \underline{25.1}       & \underline{20.9}       & 0.024      & 0.035      & 10.9       & 10.2       & 0.115      & 10.3       & 10.2       & 1.46 \\
\midrule
3D-EMGP & \textbf{0.057} & \textbf{37.1} & \textbf{21.3} & \textbf{18.2} & \textbf{0.020} & \textbf{0.026} & \textbf{9.3} & \textbf{8.7} & 0.092      & \textbf{8.6} & \textbf{8.6} & \textbf{1.38} \\
\bottomrule
\end{tabular}%
}
    \vskip -0.05in
  \caption{MAE (lower is better) on QM9. All methods share the same backbone as Base.}
  \label{tab:qm9main}%
      \vskip -0.25in
\end{table*}%

\subsection{Experimental Setup}
\paragraph{Pretraining dataset} We leverage a large-scale molecular dataset GEOM-QM9~\cite{axelrod2022geom} with corresponding 3D conformations as our pretraining dataset. Specifically, we select the conformations with top-10 Boltzmann weight\footnote{Boltzmann weight is the statistic weight for each conformer determined by its energy.} for each molecule, and filter out the conformations that overlap with the testing molecules in downstream tasks, leading to 100k conformations in total. 
\paragraph{Downstream tasks} To thoroughly evaluate our proposed pretraining framework, we employ the two widely-adopted 3D molecular property prediction datasets: MD17~\cite{chmiela2017machine} and QM9~\cite{ramakrishnan2014quantum}, as the downstream tasks. In detail, MD17 contains the simulated dynamical trajectories of 8 small organic molecules, with the recorded energy and force at each frame. We select 9,500/500 frames as the training/validation set of each molecule. We jointly optimize the energy and force predictions by firstly obtaining the energy and deriving the force by $\mF = -\nabla_{\mX} E $.
QM9 labels 12 chemical properties of small molecules with stable 3D structures. We follow the data split in~\citet{anderson2019cormorant} and~\citet{satorras2021n}, where the sizes of training, validation, and test sets are 100k, 18k, and 13k, respectively.
\paragraph{Baselines}
The baseline without any pretraining is termed as 
\emph{Base}. 
Several widely-used 2D pretraining tasks are evaluated:
\emph{AttrMask}~\cite{DBLP:conf/iclr/HuLGZLPL20} reconstructs the masked atom types; \emph{EdgePred}~\cite{hamilton2017inductive} predicts the existence of chemical bonds; \emph{GPT-GNN}~\cite{DBLP:conf/kdd/HuDWCS20} autoregressively generates the 2D graph in a pre-defined order;
\emph{InfoGraph}~\cite{DBLP:conf/iclr/SunHV020} maximizes the mutual information between node and graph representations;
\emph{GCC}~\cite{qiu2020gcc} employs contrastive learning by the subgraph instance discrimination task;
\emph{GraphCL}~\cite{DBLP:conf/nips/YouCSCWS20} applies contrastive learning on graph representations through several augmentations;
\emph{JOAO} and \emph{JOAOv2}~\cite{you2021graph} further learn to better combine the augmentations. In addition, we also compare with \emph{GraphMVP}~\cite{DBLP:journals/corr/abs-2110-07728} and \emph{3D Infomax}~\cite{DBLP:journals/corr/abs-2110-04126} which simultaneously train 2D- and 3D-GNN models. Notably, different from the original setting in GraphMVP and 3D Infomax, which evaluates the pretrained 2D GNN, we preserve the 3D model for our 3D tasks in the experiments. We further involve \emph{GEM}~\cite{DBLP:journals/corr/abs-2106-06130} which applies bond length prediction, bond angle prediction, and atom distance prediction as 3D pretraining tasks. We also propose \emph{PosPred}, an extension of 2D AttrMask to 3D, as a competitive 3D baseline which masks the positions of a random subset of atoms with the center of each input molecule, and then reconstructs the masked positions.
\emph{For all above model-agnostic methods, we adapt exactly the same 3D backbone as our method, ensuring fairness}. Particularly, we leverage EGNN~\cite{satorras2021n}, a widely adopted equivariant GNN, as our backbone.
Details are deferred to Appendix B.2. 


\subsection{Main Results}
\label{sec:res}
Table~\ref{tab:md17} and~\ref{tab:qm9main} document the results of all pretraining methods on MD17 and QM9, respectively, where the underlined numbers indicate the previous SOTAs on that task, and the numbers in bold are the best results. We interpret the results by answering the questions as follows. 
\begin{itemize}
    \item \emph{How does our 3D-EMGP perform in general?}
It is observed from both Table~\ref{tab:md17} and Table~\ref{tab:qm9main} that 3D-EMGP achieves the best performance in most cases, and its general effectiveness is better justified by checking the average MAE of the last column in Table~\ref{tab:md17}. Particularly for force prediction, the superiority of 3D-EMGP to other methods is more remarkable (3D-EMGP achieves 0.0969, while the second best GraphCL is 0.1247), probably because the design of our node-level force prediction during pretraining is generalizable to the real force distribution after finetuning.
\item \emph{Are the 3D-aware pretraining tasks helpful?}
Compared with Base, 3D-EMGP consistently delivers meaningful improvement on MD17, and gains better performance on QM9 except for the evaluation on $R^2$. We conjecture that the quantity $R^2$ assessing Electronic spatial extent is hardly recovered by the pretraining dataset, hence incurring negative transfer for all pretraining methods. Interestingly, PosPred usually behaves promisingly on MD17, although its 3D prediction objective is simple. 
\item \emph{How do the traditional 2D methods perform on 3D tasks?}
Most 2D methods struggle, especially for force prediction on MD17 and tasks on QM9. Taking MD17 as an example, we observe a serious negative transfer phenomenon on several 2D-pretraining baselines like EdgePred, GPT-GNN, and InfoGraph. This may be due to the dissimilarity between the pretraining 2D task and downstream 3D targets~\cite{Rosenstein05totransfer}. Meanwhile, GraphCL and GraphMVP achieve better results, because GraphMVP has considered the 3D information, while GraphCL employs the contrastive learning technique to capture the invariant information which may exist in both 2D and 3D areas. Our method 3D-EMGP achieves the best performance in most cases, which verifies the effectiveness of our 3D pretraining task. 
\end{itemize}



\subsection{Ablation Studies}
\paragraph{Contribution of each component} We provide extensive ablation results on MD17 to show 
how each component in our model contributes in Table~\ref{tab:abl}. In detail, we study the following aspects. \textbf{1.} We inspect the contributions of the node-level task (\emph{i.e.} EFP) and the graph-level task (\emph{i.e.} INP) by comparing our method with its variants without EFP or INP. It is shown that both EFP and INP improve the performance individually, and their combination leads to more precise predictions. \textbf{2.} To evaluate the importance of the proposed Riemann-Gaussian distribution, we relax the distribution in Eq.~(\ref{eq:rg}) as the Gaussian distribution $p(\tilde{\mX}\mid\mX) =  \gN(\mX, \sigma^2\mI)$, violating the doubly E(3)-invariance in Eq.~(\ref{eq:double}). The results suggest that such relaxation causes certain performance detriment. We also compare with a variant which alternatively applies denoising on the E(3)-invariant distance matrix. This surrogate does not fit in the energy framework, and the performance also drops by a margin. This verifies the empirical significance of leveraging Riemann-Gaussian distribution. \textbf{3.} We analyze the necessity of the energy-based modeling proposed in~\textsection~\ref{sec:ebm}. Instead of deriving the force as the gradient of the energy model, it is also possible to straightly apply the equivariant output from EGNN as the predicted force signal in the EFP loss in Eq.~(\ref{eq:new-efp}). We name this variant as Direct. 
Results in Table~\ref{tab:abl} report that this variant suffers higher MAEs. From an algorithmic point of view, the energy-based strategy is able to better capture the global patterns and therefore lead to preferable performance, by first pooling the embeddings of all nodes as the energy and then computing the gradient of energy as the force.

\begin{table}[t!]
  \centering
     \setlength{\tabcolsep}{4.2pt}
     \small
       \begin{threeparttable}
  \resizebox{0.93\linewidth}{!}{
    \begin{tabular}{l|cccc|cc}
    \toprule
          & \multicolumn{4}{c|}{Proposed Components} & \multicolumn{2}{c}{Average MAE} \\
          & EFP   & INP   & Riemann & Energy & Energy & Force \\
    \midrule
    Base  &       &       &       &       & 0.1191  & 0.2086  \\
    Ours  & \checkmark   & \checkmark   & \checkmark   & \checkmark   & \textbf{0.0876}  & \textbf{0.0968}  \\
    \midrule
    INP only &       & \checkmark   & \checkmark   & \checkmark   & 0.0974  & 0.1350  \\
    EFP only & \checkmark   &       & \checkmark   & \checkmark   & 0.0905  & 0.1193  \\
    Gaussian & \checkmark   & \checkmark   &   &   \checkmark     & 0.0912  & 0.1060  \\
    Distance & \checkmark\tnote{1} & \checkmark &   &   & 0.0931  &  0.1292  \\
        Direct & \checkmark   & \checkmark   &   \checkmark     &   & 0.0914  & 0.1267  \\
    \bottomrule
    \end{tabular}%

    }
        \end{threeparttable}%
        \vskip -0.05in
    \caption{Ablation studies on MD17. $^{1}$Denoising on distance.}
      \label{tab:abl}%
  \vskip -0.25in
\end{table}%

\begin{figure}[htbp]
    \centering
    \vskip -0.1in
    \includegraphics[width=0.40\textwidth]{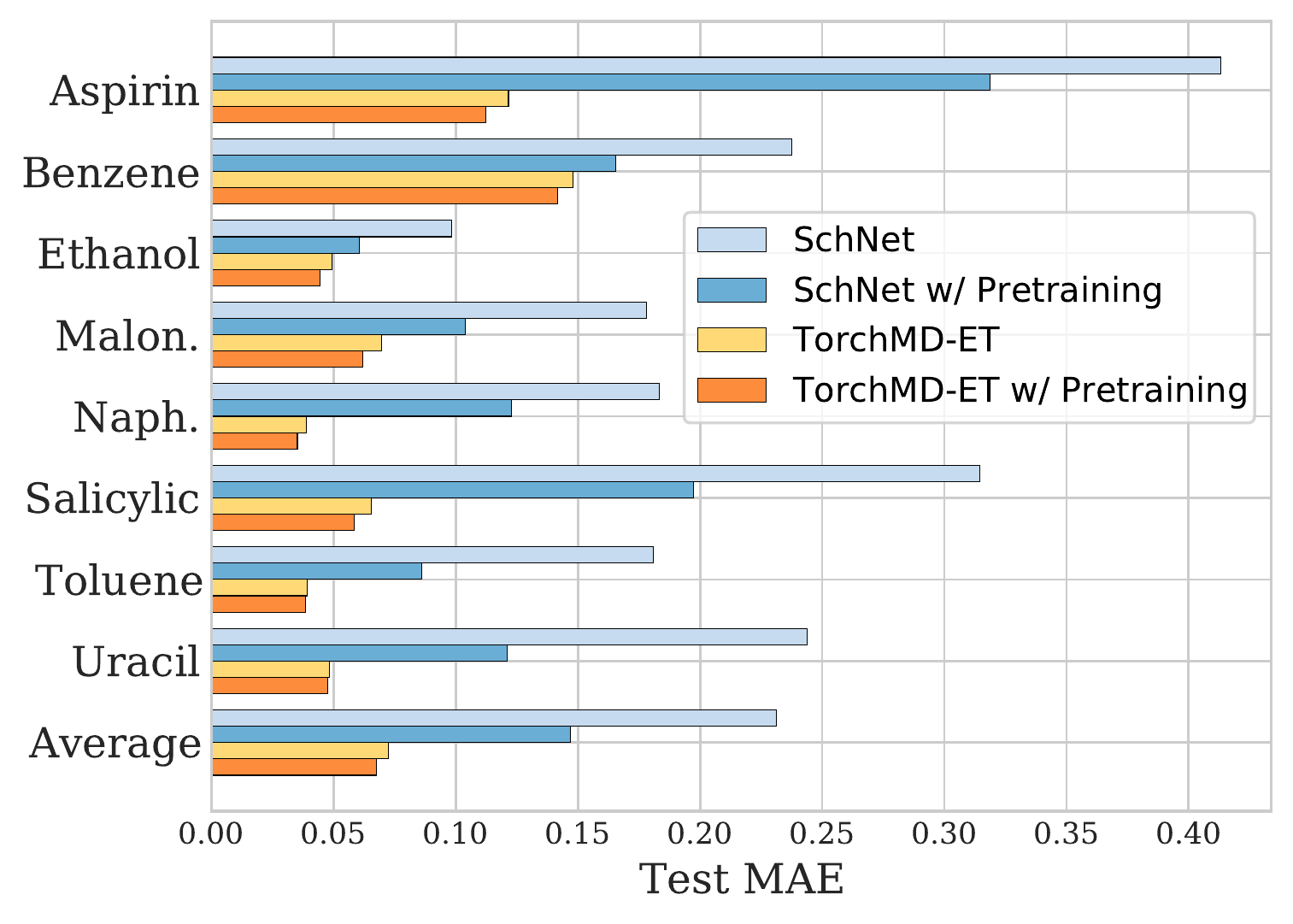}
    \vskip -0.1in
    \caption{MAE on MD17 with different backbones.}
    \vskip -0.2in
    \label{fig:backbone}
\end{figure}

\paragraph{Performance with different backbones} 
We further apply our method to another two 3D backbones, SchNet~\cite{schutt2017schnet} and TorchMD-ET~\cite{tholke2021equivariant} to evaluate the generalization of our proposed self-supervised tasks. Fig.~\ref{fig:backbone} collects the results for force prediction on MD17. The compelling improvement verifies that our pretraining method is widely applicable and generalizes well to a broad family of 3D backbones consistently.

\begin{figure}[htbp]
    \centering
    \vskip -0.1in
    \includegraphics[width=0.40\textwidth]{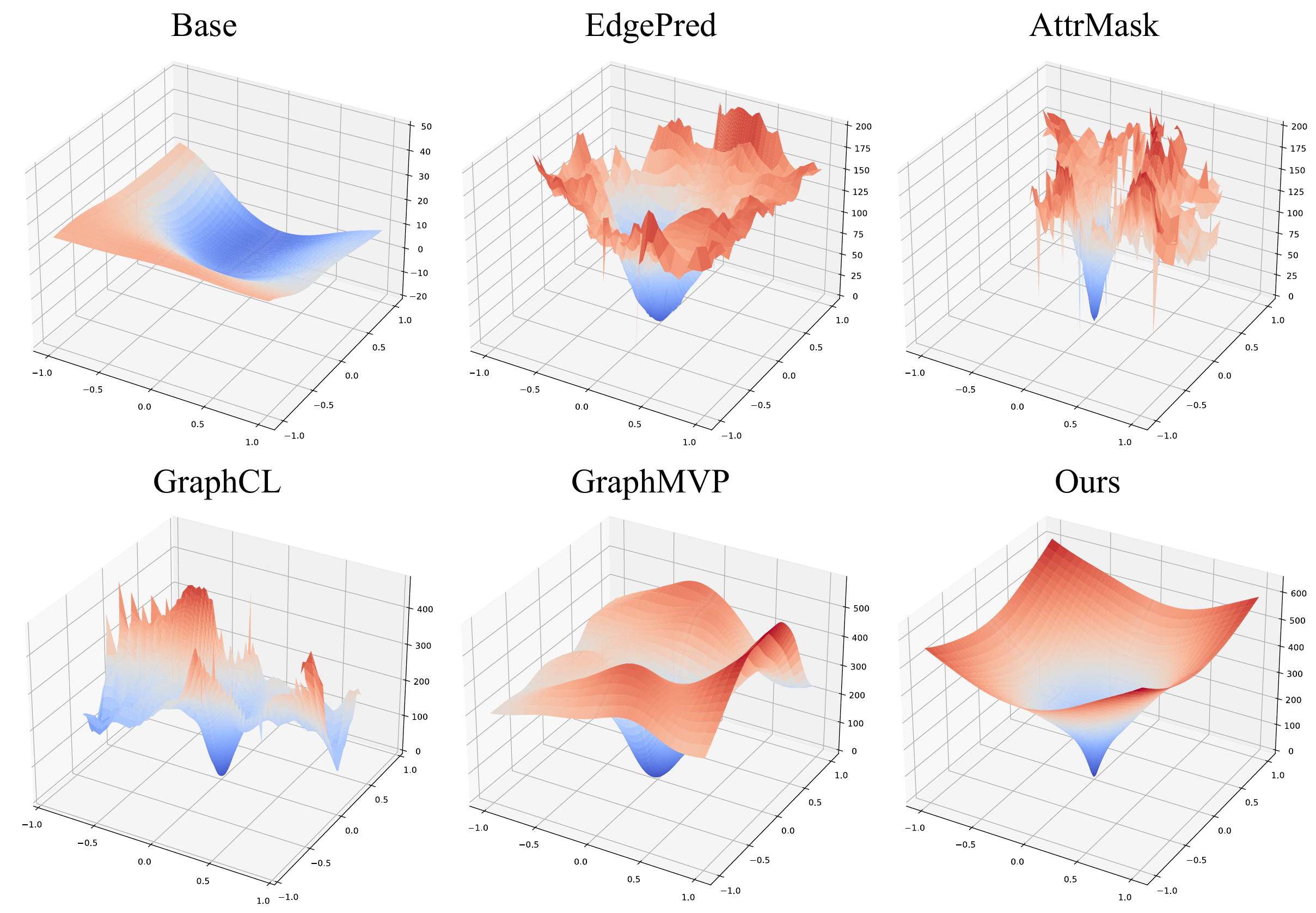}
    \vskip -0.1in
    \caption{Energy landscape of different pretrained models.}
    \vskip -0.2in
    \label{fig:vis}
\end{figure}

\subsection{Visualization}

To probe the representation space of different pretrained models, we visualize the local energy landscape around a given conformation. To do so, we first fix the pretrained representation model and finetune an energy projection head on MD17 to fit ground-truth energy labels, in order to project the pretrained representations onto the energy surface. Note that there is initially no energy projection head for other methods, and we manually add an MLP on top of their backbone models similar to Eq.~(\ref{eq:E}). After training the energy head, we select a random aspirin conformation $\mX$ from MD17 and randomly generate two directions $\mD_1,\mD_2\in\R^{3\times N}$ according to Gaussian distribution. We construct a 2-dimension conformation plane as $\{\tilde{\mX}(i,j)|\tilde{\mX}(i,j)=\mX + i\mD_1 + j\mD_2\}$. For each point by varying the values of $i$ and $j$, we calculate its output energy by $E_{i,j}=E(\varphi_{\text{EGN}}(\tilde{\mX}(i,j)))$,  where $E,\varphi_{\text{EGN}}$ denote the energy projection head and the pretrained model, respectively. Fig.~\ref{fig:vis} plots the energy landscape $(i,j,E_{i,j})$ for several compared approaches and our 3D-EMGP. We interestingly find that the landscape by our method converges towards the original conformation smoothly and decreasingly, which implies the observed conformation corresponds to a metastable state with locally-lowest energy on the projected conformation plane. 
However, the 2D-based pretrained models such as EdgePred, AttrMask, and GraphCL deliver rugged landscapes. We speculate the reason is that their knowledge acquired from the pretraining process does not comply with the underlying energy distribution. The Base method outputs a flat surface, as it is less knowledgeable by solely learning from the small data.

\section{Conclusion}

 In this work, we propose a general self-supervised pretraining framework for 3D tasks on molecules. It consists of a node-level Equivariant Force Prediction (EFP) and a graph-level Invariant Noise-scale Prediction (INP) task to jointly extract the geometric information from a large-scale 3D molecular dataset. Experiments on MD17 and QM9 showcase the superiority of our method to conventional 2D counterparts. Necessary ablations, visualizations, and analyses are also provided to support the validity of our design as well as the generalization of our method on different backbones. 
 
\section*{Acknowledgments}

This work is jointly supported by the National Natural Science Foundation of China
(No.61925601, No.62006137); Guoqiang Research Institute General Project, Tsinghua University (No. 2021GQG1012); Beijing Academy of Artificial Intelligence; Huawei Noah’s Ark Lab;
Beijing Outstanding Young Scientist Program (No. BJJWZYJH012019100020098);Tencent AI Lab. Rhino-Bird Visiting Scholars
Program (VS2022TEG001).

\bibliography{ref}

\appendix
\setcounter{table}{3}
\setcounter{figure}{4}

\newpage

The appendix is organized as follows. We provide a thorough theoretical analysis, including the necessary proofs and derivations in Sec.~\ref{sec:theo}. We describe the implementation details and provide more experiment results and ablations in Sec.~\ref{sec:detail}. Sec.~\ref{sec:vis} displays more visualizations on MD17 and QM9 dataset. 

\section{Theoretical Analysis}
\label{sec:theo}

\subsection{Connection between $\gL_{\text{EFP}}$ and $\gL_{\text{EFP-DN}}$}
\label{sec:a1}
In Sec.~\ref{sec:efp}, we expect to match the predicted forces with the gradient of the log density, which is also called \emph{score}, in Eq.~(\ref{eq:efp1}) as follows.
\setcounter{equation}{5}
\begin{align}
\label{eq:efp1}
    \gL_{\text{EFP}}=\E_{\gG\sim\sG}\big[  \|\hat{\mF}(\mX)-\nabla_{\mX}\log p(\mX) \|_F^2\big].
\end{align}

However, as the close formulation of $p(\mX)$ is unattainable, it is hard to directly train on $\gL_{\text{EFP}}$. Hence, we instead transform the objective into a denoising problem by matching the conditional distribution $p(\tilde{\mX}\mid\mX)$ in Eq.~(\ref{eq:new-efp1}).

\begin{align}
\label{eq:new-efp1}
    \gL_{\text{EFP-DN}}=\E_{\gG\sim\sG, \tilde{\mX}\sim p(\tilde{\mX}\mid\mX)} \big[  \|\hat{\mF}(\tilde{\mX})-\nabla_{\tilde{\mX}}\log p(\tilde{\mX}\mid\mX) \|_F^2\big].
\end{align}

The equivalence between the objectives $\gL_{\text{EFP-DN}}$ and $\gL_{\text{EFP}}$ is ensured by the following proposition.

\setcounter{proposition}{1}
\begin{proposition}
For an arbitrary noise sampling $p(\tilde{\mX}\mid\mX)$,
$\gL_{\text{EFP}}=\gL_{\text{EFP-DN}}+ C$ where $C$ is a constant independent to $\varphi_{\text{EGN}}$, if certain mild conditions hold.
\end{proposition}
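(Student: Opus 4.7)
The plan is to reproduce the classical denoising score matching argument of~\citet{vincent2011connection}, adapted to the notation of this paper. The idea is to expand the squared Frobenius norms in both losses and show that only the cross term differs in form, while the ``difficult'' term $\|\hat{\mF}(\tilde{\mX})-\nabla_{\tilde{\mX}}\log p(\tilde{\mX})\|_F^2$ from the marginal score and $\|\hat{\mF}(\tilde{\mX})-\nabla_{\tilde{\mX}}\log p(\tilde{\mX}\mid\mX)\|_F^2$ from the conditional score actually share identical quadratic and linear pieces in $\hat{\mF}$, up to additive quantities that do not depend on $\varphi_{\text{EGN}}$. Concretely, I will interpret $\gL_{\text{EFP}}$ as the score matching objective under the marginal $p(\tilde{\mX})=\int p(\tilde{\mX}\mid\mX)p(\mX)\,d\mX$, so that the claim becomes the standard DSM--ESM equivalence.

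First, I would expand both losses as
\begin{align*}
\gL_{\text{EFP}} &= \E_{\tilde{\mX}\sim p(\tilde{\mX})}\!\big[\|\hat{\mF}(\tilde{\mX})\|_F^2\big] - 2\,\E_{\tilde{\mX}\sim p(\tilde{\mX})}\!\big[\langle\hat{\mF}(\tilde{\mX}),\nabla_{\tilde{\mX}}\log p(\tilde{\mX})\rangle\big] + C_1,\\
\gL_{\text{EFP-DN}} &= \E_{\mX,\tilde{\mX}}\!\big[\|\hat{\mF}(\tilde{\mX})\|_F^2\big] - 2\,\E_{\mX,\tilde{\mX}}\!\big[\langle\hat{\mF}(\tilde{\mX}),\nabla_{\tilde{\mX}}\log p(\tilde{\mX}\mid\mX)\rangle\big] + C_2,
\end{align*}
where $C_1,C_2$ depend only on $p(\tilde{\mX})$ and $p(\tilde{\mX}\mid\mX)$ and are independent of $\varphi_{\text{EGN}}$. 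The first terms match because the marginal distribution of $\tilde{\mX}$ under the joint $p(\tilde{\mX},\mX)=p(\tilde{\mX}\mid\mX)p(\mX)$ is precisely $p(\tilde{\mX})$.

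The crux is to show the two cross terms are equal. I would use the identity
\begin{align*}
\nabla_{\tilde{\mX}}\log p(\tilde{\mX}) \;=\; \frac{\nabla_{\tilde{\mX}}\int p(\tilde{\mX}\mid\mX)p(\mX)\,d\mX}{p(\tilde{\mX})} \;=\; \int \frac{p(\mX)p(\tilde{\mX}\mid\mX)}{p(\tilde{\mX})}\,\nabla_{\tilde{\mX}}\log p(\tilde{\mX}\mid\mX)\,d\mX,
\end{align*}
i.e.\ $\nabla_{\tilde{\mX}}\log p(\tilde{\mX}) = \E_{\mX\mid\tilde{\mX}}[\nabla_{\tilde{\mX}}\log p(\tilde{\mX}\mid\mX)]$. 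Substituting into the cross term of $\gL_{\text{EFP}}$ and applying the tower property yields exactly the cross term of $\gL_{\text{EFP-DN}}$. Combining these observations gives $\gL_{\text{EFP}}=\gL_{\text{EFP-DN}}+C$ with $C=C_1-C_2$ independent of the model.

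The main obstacles are technical regularity issues rather than conceptual ones: one needs $p(\mX)$, $p(\tilde{\mX}\mid\mX)$ and the induced marginal $p(\tilde{\mX})$ to be differentiable in $\tilde{\mX}$, the score-like quantities to be square-integrable so that all expectations are finite, and the interchange of gradient and integral in the identity above to be justified (e.g.\ via dominated convergence). These are exactly the ``mild conditions'' alluded to in the statement and are satisfied for the Riemann-Gaussian family in Eq.~(\ref{eq:rg}) on compactly-supported conformer distributions. Once these are in place, no boundary terms appear (unlike in the Hyv\"arinen implicit score matching derivation), so the argument reduces cleanly to the two algebraic manipulations above.
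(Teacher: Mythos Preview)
Your proposal is correct and follows exactly the route the paper takes: the paper does not give its own argument but simply states that the proposition is equivalent to Eq.~(11) in \citet{vincent2011connection} and defers to that appendix, and what you have written is precisely Vincent's denoising score matching derivation (expand the squares, match the quadratic terms via the marginal of $\tilde{\mX}$, and equate the cross terms using $\nabla_{\tilde{\mX}}\log p(\tilde{\mX})=\E_{\mX\mid\tilde{\mX}}[\nabla_{\tilde{\mX}}\log p(\tilde{\mX}\mid\mX)]$). Your explicit remark that $\gL_{\text{EFP}}$ must be read as score matching under the noisy marginal $p(\tilde{\mX})$ rather than the clean $p(\mX)$ is a useful clarification the paper leaves implicit.
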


Proposition 2 is equivalent to Eq. (11) in~\citet{vincent2011connection}. See Appendix of~\citet{vincent2011connection} for detailed proofs.

\subsection{Calculation of the Gradients of Riemann-Gaussian Distribution}
\label{sec:a2}
According to Eq.~(\ref{eq:rg}) and Eq.~(\ref{eq:distance}), we can derive the formulation of the proposed Riemann-Gaussian distribution as

\begin{align}
    p_{\sigma}(\tilde{\mX}\mid\mX) & = \frac{1}{Z(\sigma)}\exp\left(-\frac{d^2(\tilde{\mX},\mX)}{4\sigma^2} \right) \notag\\
    \nonumber
    &= \frac{1}{Z(\sigma)}\exp\left(-\frac{\| \tilde{\mY}^\top\tilde{\mY} - \mY^\top\mY\|_F^2}{4\sigma^2} \right),
\end{align}

and its partial derivative to $\tilde{\mX}$ is

\begin{align}
\nonumber
   & \nabla_{\tilde{\mX}} \log p_{\sigma}(\tilde{\mX}\mid\mX)\notag \\ =&
     \nabla_{\tilde{\mY}} \log p_{\sigma}(\tilde{\mX}\mid\mX)-\mu(\nabla_{\tilde{\mY}} \log p_{\sigma}(\tilde{\mX}\mid\mX))\notag \\
     =& 
     -\frac{1}{\sigma^2}\big[\tilde{\mY}(\tilde{\mY}^\top \tilde{\mY}-\mY^\top \mY)-\mu(\tilde{\mY}(\tilde{\mY}^\top \tilde{\mY}-\mY^\top \mY))\big]\notag \\
     =& 
     -\frac{1}{\sigma^2}\big[\tilde{\mY}(\tilde{\mY}^\top \tilde{\mY}-\mY^\top \mY)-\mu(\tilde{\mY}\tilde{\mY}^\top \tilde{\mY})-\mu(\tilde{\mY}\mY^\top \mY)\big]\notag \\
     \nonumber
     =&  -\frac{1}{\sigma^2}\left((\tilde{\mY}\tilde{\mY}^\top) \tilde{\mY}-(\tilde{\mY}\mY^\top) \mY \right) \notag,
\end{align}
where, $\mu(\cdot)$ computes the mean of the columns, and $\mu(\tilde{\mY}\tilde{\mY}^\top \tilde{\mY})=\mu(\tilde{\mY}\mY^\top \mY)=0$ since the columns of both $\tilde{\mY}$ and $\mY$ have zero mean. 

\subsection{Complexity Analysis}
\label{sec:a3}
Consider the gradient of the Riemann-Gaussian distribution
\begin{align}
\nonumber
\nabla_{\tilde{\mX}} \log p_{\sigma}(\tilde{\mX}\mid\mX) = -\frac{1}{\sigma^2}\left((\tilde{\mY}\tilde{\mY}^\top) \tilde{\mY}-(\tilde{\mY}\mY^\top) \mY \right) \notag.
\end{align}

We will further prove its $\gO(N)$ complexity, where $N$ is the number of atoms in a molecule.

\textbf{1.} The zero mean process of the coordinates, where $\mY = \mX - \bm{\mu}(\mX), \tilde{\mY} = \tilde{\mX} - \bm{\mu}(\tilde{\mX})$, takes $\gO(N)$ complexity. 

\textbf{2.} Let $\mY = [\vy_1, \vy_2, \cdots, \vy_N]$ and $\tilde{\mY} = [\tilde{\vy}_1, \tilde{\vy}_2, \cdots, \tilde{\vy}_N]$, where $\vy_i, \tilde{\vy}_i \in \sR^3$ denote the original and noised zero mean coordinate of atom $i$. We have $\tilde{\mY}\tilde{\mY}^\top = \sum_{i=1}^N \tilde{\vy}_i \tilde{\vy}_i^\top$ and $\tilde{\mY}\mY^\top = \sum_{i=1}^N \tilde{\vy}_i \vy_i^\top$, where each matrix $\tilde{\vy}_i \tilde{\vy}_i^\top, \tilde{\vy}_i \vy_i^\top \in \sR^{3\times 3}$ can be calculated in constant time. Hence, the calculation of $\tilde{\mY}\tilde{\mY}^\top$ and $\tilde{\mY}\mY^\top$ takes $\gO(N)$ complexity.

\textbf{3.} Consider the vectorized representation of $(\tilde{\mY}\tilde{\mY}^\top) \tilde{\mY}$,
\begin{align}
\nonumber
(\tilde{\mY}\tilde{\mY}^\top) \tilde{\mY} = [(\tilde{\mY}\tilde{\mY}^\top)\tilde{\vy}_1,  (\tilde{\mY}\tilde{\mY}^\top)\tilde{\vy}_2, \cdots, (\tilde{\mY}\tilde{\mY}^\top)\tilde{\vy}_N].
\end{align}
As $\tilde{\mY}\tilde{\mY}^\top \in \sR^{3\times 3}$ and $\tilde{\vy}_i \in \sR^3, \forall 1\leq i\leq N$, the calculation of $(\tilde{\mY}\tilde{\mY}^\top)\tilde{\vy}_i$ costs constant time, and $(\tilde{\mY}\tilde{\mY}^\top) \tilde{\mY}$ can be calculated in $\gO(N)$ complexity. Similarly, the calculation of $(\tilde{\mY}\mY^\top) \mY$ also costs $\gO(N)$ time.

Overall, the calculation of $\nabla_{\tilde{\mX}} \log p_{\sigma}(\tilde{\mX}\mid\mX)$ takes $\gO(N)$ complexity.

\subsection{Proof of Proposition 1}
\label{sec:a4}
\setcounter{proposition}{0}

\begin{proposition}
For Riemann-Gaussian $\text{Rie}_{\sigma}(\tilde{\mX}\mid\mX)$ defined in Eq.~(\ref{eq:distance}), it is doubly E(3)-invariant in accordance to Eq.~(\ref{eq:double}). 
\end{proposition}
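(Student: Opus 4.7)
The plan is to reduce the claim to the invariance of the squared distance $d^2$, since the Riemann-Gaussian density is an exponential of $-d^2/(4\sigma^2)$ divided by a normalizer that depends only on $\sigma$. Concretely, I want to prove $d(g_1\cdot\tilde{\mX}, g_2\cdot\mX) = d(\tilde{\mX}, \mX)$ for arbitrary $g_1, g_2 \in \text{E}(3)$, and then substitute this into Eq.~(\ref{eq:rg}).

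First, I would decompose an arbitrary $g \in \text{E}(3)$ acting on a $3\times N$ coordinate matrix as $g\cdot\mX = \mO\mX + \vt\mathbf{1}^\top$, where $\mO \in \text{O}(3)$ is orthogonal and $\vt \in \sR^3$ is a translation. The first step is to show that the centering operator $\mY = \mX - \bm{\mu}(\mX)\mathbf{1}^\top$ is \emph{equivariant under rotations and kills translations}: because the column mean is linear, $\bm{\mu}(g\cdot\mX) = \mO\bm{\mu}(\mX) + \vt$, hence
\begin{align*}
(g\cdot\mX) - \bm{\mu}(g\cdot\mX)\mathbf{1}^\top = \mO\bigl(\mX - \bm{\mu}(\mX)\mathbf{1}^\top\bigr) = \mO\mY.
\end{align*}

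Next I would show that the Gram-like quantity $\mY^\top\mY$ appearing inside $d$ absorbs the orthogonal factor as well. Since $(\mO\mY)^\top(\mO\mY) = \mY^\top\mO^\top\mO\mY = \mY^\top\mY$, the centered Gram matrix is fully E(3)-invariant. Applying this independently to each argument of $d$ in Eq.~(\ref{eq:distance}) yields
\begin{align*}
d(g_1\cdot\tilde{\mX}, g_2\cdot\mX) = \bigl\|\tilde{\mY}^\top\tilde{\mY} - \mY^\top\mY\bigr\|_F = d(\tilde{\mX}, \mX),
\end{align*}
for any $g_1, g_2 \in \text{E}(3)$. Plugging this into Eq.~(\ref{eq:rg}) and observing that $Z(\sigma)$ depends only on $\sigma$ gives the doubly E(3)-invariance immediately. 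I would note in passing that permutation invariance in the column order follows from the same identity, since column permutations likewise preserve $\mY^\top\mY$ up to simultaneous conjugation whose Frobenius norm is invariant.

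The main technical subtlety, rather than the algebra, is justifying that $Z(\sigma)$ may legitimately be treated as depending only on $\sigma$ and not on $\mX$. One way to handle this cleanly is to perform the change of variables $\tilde{\mX}' = g^{-1}\cdot\tilde{\mX}$ in the normalizing integral; since Lebesgue measure on $\sR^{3\times N}$ is invariant under rigid motions and $d$ is invariant by the identity above, the integral defining $Z$ is unchanged when $\mX$ is replaced by $g\cdot\mX$. I would invoke the same orbit-restriction (working modulo translations, or equivalently restricting to zero-mean configurations) that is implicit in the paper's construction to sidestep divergence issues from the translation subgroup. This reduces the whole argument to the two elementary algebraic facts about centering and orthogonality noted above.
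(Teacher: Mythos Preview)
Your proposal is correct and follows essentially the same route as the paper's proof: isolate the fact that the centered Gram matrix $\mY^\top\mY$ is E(3)-invariant (centering kills the translation, orthogonality of $\mO$ kills the rotation/reflection), then substitute directly into Eq.~(\ref{eq:rg}). Your extra paragraph justifying that $Z(\sigma)$ does not depend on $\mX$ via a change-of-variables argument is a point of rigor that the paper's own proof simply takes for granted.
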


To prove Proposition 1, we first introduce the following lemma.


\begin{lemma}
Let $\mY = \mX - \mu(\mX)$. Then $\gF(\mX) =  \mY^\top \mY$ is E(3)-invariant, \emph{i.e.},  $\mY_1^\top \mY_1 = \mY_2^\top \mY_2$ if  $\mX_1 = g\cdot\mX_2, \forall g\in$ E(3).
\end{lemma}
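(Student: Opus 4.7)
The plan is to reduce the statement to two elementary facts about how the centering map and the quadratic form $\mY^\top\mY$ interact with each type of rigid motion. First, I would parameterize an arbitrary $g\in$ E(3) as an orthogonal part and a translation part, writing $g\cdot\mX = \mO\mX + \vt\vone^\top$ with $\mO\in\mathbb{R}^{3\times3}$ satisfying $\mO^\top\mO=\mI$ and $\vt\in\mathbb{R}^3$. Here $\vone\in\mathbb{R}^N$ is the all-ones vector, so $\vt\vone^\top$ applies the same translation to every column (atom).

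Next I would verify that the centering step absorbs the translation and commutes with the rotation/reflection. Concretely, if $\mX_1 = \mO\mX_2 + \vt\vone^\top$, then the column mean satisfies $\bm{\mu}(\mX_1) = \mO\bm{\mu}(\mX_2) + \vt$ by linearity of the mean, so
\begin{align*}
\mY_1 \;=\; \mX_1 - \bm{\mu}(\mX_1)\vone^\top \;=\; \mO\mX_2 + \vt\vone^\top - \mO\bm{\mu}(\mX_2)\vone^\top - \vt\vone^\top \;=\; \mO\,\mY_2.
\end{align*}
This is the key identity: after centering, only the orthogonal part survives.

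Finally I would plug this into the quadratic form and use $\mO^\top\mO=\mI$:
\begin{align*}
\mY_1^\top\mY_1 \;=\; (\mO\mY_2)^\top(\mO\mY_2) \;=\; \mY_2^\top\mO^\top\mO\,\mY_2 \;=\; \mY_2^\top\mY_2,
\end{align*}
which is the claimed invariance. There is no real obstacle here: the lemma is a direct calculation, and the only point deserving any care is making sure the translation piece is correctly handled at the centering step, since $\vt$ is a $3$-vector broadcast across all $N$ columns rather than a $3\times N$ matrix; once that is cleaned up, orthogonality finishes the proof in one line. This lemma then immediately feeds into Proposition~1 via the distance $d(\mX_1,\mX_2)=\|\mY_1^\top\mY_1-\mY_2^\top\mY_2\|_F$, since applying any $g_1,g_2\in$ E(3) separately to the two arguments leaves both Gram matrices unchanged.
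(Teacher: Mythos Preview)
Your proof is correct and follows essentially the same approach as the paper: parameterize $g$ as an orthogonal matrix plus a translation, show that centering cancels the translation and leaves $\mY_1=\mO\mY_2$, and then use $\mO^\top\mO=\mI$ to conclude $\mY_1^\top\mY_1=\mY_2^\top\mY_2$. Your notation is in fact slightly more careful than the paper's (writing $\vt\vone^\top$ to make the broadcasting explicit), but the argument is identical.
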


\begin{proof} Consider $\mX_1 = g\cdot \mX_2=\mO\mX_2 + \vt$, where $\mO \in \sR^{3\times3} $ is an orthogonal matrix which represents rotation/reflection and $\vt\in \sR^3$ is the translation vector. For $\mY_1, \mY_2$, we have
\begin{align}
\nonumber
    \mY_1 &= \mX_1 - \mu(\mX_1) = \mO\mX_2 + \vt - \mu(\mO\mX_2 + \vt) \notag \\
    \
    \nonumber
    &= \mO\mX_2 - \mu(\mO\mX_2) = \mO(\mX_2 - \mu(\mX_2)) = \mO\mY_2.
\end{align}

That is, the translation transformation is reduced by the zero mean operation. Moreover, for the inner-product computation, we have
\begin{align}
\nonumber
    \gF(\mX_1) = \mY_1^\top \mY_1 = \mY_2^\top \mO^\top \mO \mY_2 = \mY_2^\top \mY_2 = \gF(\mX_2).
\end{align}
\end{proof} 
Based on the above lemma, we can directly prove Proposition 1 as follows.
\begin{proof} \begin{align}
   & p_{\sigma}(g_1\cdot\tilde{\mX}\mid g_2\cdot\mX)\notag \\ =& \frac{1}{Z(\sigma)}\exp\left(-\frac{d^2(g_1\cdot\tilde{\mX},g_2\cdot\mX)}{4\sigma^2} \right) \notag\\
    =& \frac{1}{Z(\sigma)}\exp\left(-\frac{\|\gF(g_1\cdot\tilde{\mX})-\gF(g_2\cdot\mX))\|_F^2}{4\sigma^2} \right) \notag\\
    =& \frac{1}{Z(\sigma)}\exp\left(-\frac{\|\gF(\tilde{\mX})-\gF(\mX))\|_F^2}{4\sigma^2} \right) \notag\\
    \nonumber
    =& p_{\sigma}(\tilde{\mX}\mid \mX),
\end{align}

which proves the doubly E(3)-invariance of the Riemann-Gaussian distribution.
\end{proof}

\subsection{Discussion of Normalization Term}
\label{sec:a5}
The inner product term $\tilde{\mY}\tilde{\mY}^\top$ and $\tilde{\mY}\mY^\top$ in Eq.~(\ref{eq:gradient}) can be very large and make the pretraining procedure numerically unstable. Therefore, we adopt the normalization term $\alpha = (\|\tilde{\mY}\tilde{\mY}^\top\|_F + \|\tilde{\mY}\mY^\top\|_F)/2$ in the final EFP loss to ensure the the pseudo force achieve the same order of magnitude as the coordinates $\tilde{\mY}, \mY$. To align with our analysis, this normalization term, while improving numerical stability by controlling the magnitude of the inner product, should also satisfy the doubly-E(3) invariance. Formally, we have

\setcounter{proposition}{2}
\begin{proposition}
Let $\tilde{\mY}=\tilde{\mX}-\mu(\tilde{\mX}), \mY=\mX-\mu(\mX)$, and $\alpha(\tilde{\mX},\mX)=(\|\tilde{\mY}\tilde{\mY}^\top\|_F + \|\tilde{\mY}\mY^\top\|_F)/2$, $\alpha$ satisfy the doubly-E(3) invariance, \emph{i.e.}, $\alpha(g_1\cdot \tilde{\mX}, g_2\cdot \mX) = \alpha(\tilde{\mX},\mX), \forall g_1, g_2 \in \text{E(3)}$.
\end{proposition}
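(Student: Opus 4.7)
The plan is to reduce the claim to a routine orthogonal-invariance check for the Frobenius norm, reusing the mean-subtraction argument from Lemma~1 in Section~A.4. Concretely, write an arbitrary $g_1, g_2 \in \text{E}(3)$ as $g_1\cdot\tilde{\mX} = \mO_1\tilde{\mX} + \vt_1$ and $g_2\cdot\mX = \mO_2\mX + \vt_2$, with $\mO_1, \mO_2 \in \R^{3\times 3}$ orthogonal and $\vt_1, \vt_2 \in \R^3$. Exactly as in the proof of Lemma~1, the zero-mean operator kills the translational components, so the perturbed and clean centered coordinates transform simply as $\tilde{\mY} \mapsto \mO_1 \tilde{\mY}$ and $\mY \mapsto \mO_2 \mY$, independently on each side.

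Next I would substitute these transformed quantities into the two Frobenius-norm terms inside $\alpha$. For the first term, $\tilde{\mY}\tilde{\mY}^\top \mapsto \mO_1(\tilde{\mY}\tilde{\mY}^\top)\mO_1^\top$; for the second, $\tilde{\mY}\mY^\top \mapsto \mO_1(\tilde{\mY}\mY^\top)\mO_2^\top$. The key observation is the standard fact that the Frobenius norm is invariant under left and right multiplication by any orthogonal matrices, which follows from $\|\mO_1 A \mO_2^\top\|_F^2 = \Tr(\mO_2 A^\top \mO_1^\top \mO_1 A \mO_2^\top) = \Tr(A^\top A) = \|A\|_F^2$, using $\mO_i^\top \mO_i = \mI$ and cyclicity of the trace. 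Applying this to both terms shows that each of $\|\tilde{\mY}\tilde{\mY}^\top\|_F$ and $\|\tilde{\mY}\mY^\top\|_F$ is unchanged, hence so is their average $\alpha$.

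Putting these two steps together yields $\alpha(g_1 \cdot \tilde{\mX}, g_2 \cdot \mX) = \alpha(\tilde{\mX}, \mX)$ for all $g_1, g_2 \in \text{E}(3)$, which is precisely the doubly E(3)-invariance claim. There is really no hard step here: the only subtlety is that the two Frobenius norms mix $\tilde{\mY}$ and $\mY$ in the cross term $\tilde{\mY}\mY^\top$, so one must check that \emph{independent} orthogonal transformations on the two arguments (not just a common one) still preserve the norm; this is exactly why the general $\|\mO_1 A \mO_2^\top\|_F = \|A\|_F$ identity, rather than a one-sided version, is needed. Once that is noted, the proof is a two-line computation paralleling the structure of Appendix~A.4.
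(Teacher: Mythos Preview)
Your proposal is correct and follows essentially the same route as the paper: write $g_i$ as an orthogonal map plus translation, use the zero-mean step to reduce $\tilde{\mY}\mapsto\mO_1\tilde{\mY}$, $\mY\mapsto\mO_2\mY$, and then invoke orthogonal invariance of $\|\cdot\|_F$ on each term. The paper's proof is the same two-line computation, just without your explicit trace justification for $\|\mO_1 A \mO_2^\top\|_F=\|A\|_F$.
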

\begin{proof}
\begin{align}
\nonumber
& \alpha(g_1\cdot \tilde{\mX}, g_2\cdot \mX) \\
=& \alpha(\mR_1\tilde{\mX} + \vt_1, \mR_2 \mX+\vt_2)\notag \\
=& (\|\mR_1\tilde{\mY}\tilde{\mY}^\top \mR_1^\top\|_F + \|\mR_1\tilde{\mY}\mY^\top \mR_2^\top\|_F)/2 \notag \\
=& (\|\tilde{\mY}\tilde{\mY}^\top\|_F + \|\tilde{\mY}\mY^\top\|_F)/2 \notag\\
=& \alpha(\tilde{\mX},\mX)\notag
\end{align}
\end{proof}

\subsection{Sampling from Riemann-Gaussian Distribution}
\label{sec:a6}
We adopt Langevin Dynamics~\cite{schlick2010molecular} to approximately sample noisy coordinates $\tilde{\mX}$ from $\mX$ according to the proposed Riemann-Gaussian distribution. The detailed algorithm is shown as follows.

For implementation, we find that the model performance is not sensitive to $T$, and choosing $T=1$ is already able to yield favorable results. See Appendix~\ref{sec:abl_apd} for more details.

\begin{algorithm}[htbp]
  \caption{Riemann-Gaussian Sampling via Langevin Dynamics}
  \label{alg:rg_sample}
\begin{algorithmic}
  \STATE {\bfseries Input:} Original coordinates $\mX$, Noise scale $\sigma$,Maximum iteration step $T$.
  \STATE Initialize $\tilde{\mX}_0 \leftarrow \mX$
  \FOR{step $t=1$ to $T$}
  \STATE $\beta_t \leftarrow \frac{\sigma ^ 2}{2 ^ t}$;
  \STATE Derive $\vs_t\leftarrow  \nabla_{\tilde{\mX}_{t-1}}\log \text{Rie}_{\sigma}(\tilde{\mX}_{t-1}|\mX)$ according to Eq.~(11);
  \STATE Sample  $\rvepsilon_t \sim \gN(0,\mI)$;
  \STATE Calculate normalization term $\alpha_t = (\|\tilde{\mY}_{t-1}\tilde{\mY}_{t-1}^\top\|_F + \|\tilde{\mY}_{t-1}\mY^\top\|_F)/2$;
  \STATE $\tilde{\mX}_t = \tilde{\mX}_{t-1} + \beta_t \vs_t / \alpha_t + \sqrt{2\beta_t}\rvepsilon_t$;
  \ENDFOR
  \STATE {\bfseries Output:} Sampled coordinates $\tilde{\mX}_T$;
\end{algorithmic}
\end{algorithm}

\section{Experimental Details}
\label{sec:detail}

\subsection{Detailed Implementations of Baselines}
\label{sec:detail_base}

\paragraph{3D Graph Representation} As mentioned in Sec.~\ref{sec:ebm}, a 3D molecule can be modeled with node features $\mH$, edge connections $\gE$ and coordinates $\mX$. For implementation, we represent a 3D molecule as a fully connected graph, and take atom types as the node features. The edge feature $e_{ij}\in\{\text{1-hop},\text{2-hop},\text{3-hop},\text{others}\}$ is determined by the minimum hops between atom $i,j$ and models the bond, angle, torsion and non-bond interactions in molecules similar to previous works~\cite{klicpera2019directional, luo2021predicting, ying2021transformers}. The 1-hop connections are determined by the covalent bonds.

\paragraph{Implement 2D SSL baselines with 3D graphs} We adopt the above 3D graph representation method to all baseline methods and our proposed 3D-EMGP for a fair comparison. Some of the 2D baselines require the 2D topology. For EdgePred, we predict whether there exists a covalent bond between node pair $i,j$ in a molecule. For GPT-GNN, we determine the node order via BFS on the 2D edges. For GraphCL, we apply AttrMask, EdgePert, Subgraph, and NodeDrop on the 2D graph, and transform the augmented 2D graph into the 3D setting, i.e., fully connect the graph and determine the edge types with $\{\text{1-hop, 2-hop, 3-hop, others}\}$.

\paragraph{Implementation of PosPred} Given the input coordinates of a conformation $\mX$, we first randomly select $K\%$ of the atoms, replacing their coordinate with the center of the molecule $\mu(\mX)$. This gives a perturbed conformation $\mX' = \mX \odot (1-M) + \mu(\mX) \odot M$, where $M=[m_1, m_2, \cdots, m_N]$ denotes the mask vector. $m_i=1$ if atom $i$ is selected otherwise 0. The training objective of PosPred is to predict the original coordinates of the selected atoms, i.e., $\mathcal{L} = \|(\hat{\mX} - \mX)\odot M  \|_F $, where $\hat{\mX} = f_{\text{EGNN}}(\mX', H, \mathcal{G})$ is the model predicted coordinate matrix.

\subsection{Hyper-parameter Settings}
\label{sec:hyper}
We apply the EGNN model with 7 layers, 128 hidden dimensions for all methods during the pretraining and finetuning procedure. The atom coordinates are fixed among layers. For all pretraining methods, we train the model with epoch 300, batch size 64$\times$4 GPUs, Adam optimizer, and cosine decay with an initial learning rate $5\times 10^{-4}$. For our 3D-EMGP, we select $\sigma_1=10, \sigma_L=0.01, L=50$, uniformly decreasing in log scale, similar to previous generation methods~\cite{shi2021learning, luo2021predicting} and $\lambda_1=1.0, \lambda_2=0.2$ to balance the EFP and INP tasks. For AttrMask, PosPred, GraphCL, and GraphMVP, which require mask operations, we set the mask ratio as 0.15. For GraphMVP, we adopt a GIN model with 5 layers, 128 hidden dimensions as the 2D-GNN side. For 3D Infomax, we select a 7-layer PNA model with 128 hidden dimensions as the 2D side. For GEM, we apply the proposed bond length prediction, bond angle prediction, and atom distance prediction as the SSL tasks, and the training objective is the sum of the three tasks. For QM9, we finetune the model on each individual task with epoch 1,000, batch size 96, Adam optimizer with weight decay $1\times10^{-16}$, and cosine decay with initial learning rate $5\times10^{-4}$. For MD17, we jointly learn the energy and forces of each molecule trajectory with epoch 1,500, batch size 100, Adam optimizer with weight decay $1\times10^{-16}$, and ReduceLROnPlateau scheduler with patience 30, decay factor 0.9, and initial learning rate within $\{1\times10^{-3}, 5\times10^{-4}\}$. We set $\lambda_{\text{energy}}=0.2, \lambda_{\text{force}}=0.8$ to balance the loss of energy and forces. All training procedures are conducted on NVIDIA Tesla V100 GPUs.

\subsection{Results for MD17 energy prediction tasks}

\begin{table}[htbp]
  \centering
    \setlength{\tabcolsep}{2pt}
  \caption{MAE (lower is better)  on MD17 for energy prediction tasks. All methods share the same backbone as Base.}
  \resizebox{\linewidth}{!}{
\begin{tabular}{lcccccccc|c}
\toprule
Energy           & Aspirin    & Benzene    & Ethanol    & Malon. & Naph. & Salicylic & Toluene    & Uracil     & Average \\
\midrule
Base & 0.2044     & 0.0755     & 0.0532     & 0.0748     & 0.1961     & 0.1289     & 0.1036     & 0.1165     & 0.1191  \\
\midrule
AttrMask & 0.1951     & 0.0709     & 0.0495     & 0.0796     & 0.1239     & 0.1531     & 0.0924     & 0.1066     & 0.1089  \\
EdgePred & 0.2232     & 0.0717     & 0.0503     & 0.0739     & 0.1428     & 0.1346     & 0.1018     & \underline{\textbf{0.0978}} & 0.1120  \\
GPT-GNN & 0.1656     & 0.0720     & 0.0488     & 0.0740     & 0.1370     & 0.1460     & 0.0929     & 0.1040     & 0.1050  \\
GCC & 0.1897 &	0.0725 &	0.0499 &	0.0762 &	0.1455 &	0.1649 &	0.1457 &	0.1094 & 0.1192 \\
InfoGraph & 0.3320     & 0.0817     & 0.0577     & 0.0859     & 0.1356     & 0.2122     & 0.1261     & 0.1392     & 0.1463  \\
GraphCL & \underline{0.1299}     & 0.0706     & 0.0492     & 0.0722     & 0.1267     & 0.1187     & 0.0901     & 0.1049     & 0.0953  \\
JOAO &	0.1691 &	0.0708 &	0.0482 &	\underline{0.0696} &	0.1397 &	0.1232 &	0.0946 &	0.1115 &	0.1033 \\
JOAOv2 &	0.1770 &	0.0725 &	0.0527 &	0.0698 &	0.1455 &	0.1381 &	0.0932 &	0.1094 &	0.1073 \\
GraphMVP & 0.1575     & 0.0853     & \underline{\textbf{0.0479}} & 0.0726     & 0.2315     & 0.1375     & 0.0964     & 0.1041     & 0.1166  \\
3D Infomax & 0.2628 & 0.0729 & 0.0508 & 0.0757 & 0.1745 & 0.1755 & 0.1228 & 0.1017 & 0.1296 \\
GEM & 0.1657 & 0.0716 & 0.0505 & 0.0735 & 0.1209 & 0.1138 & 0.0951 & 0.1051 & 0.0995 \\
PosPred    & 0.1470     & \underline{0.0677} & 0.0497     & 0.0748     & \underline{\textbf{0.1101}} & \underline{0.1161}     & \underline{0.0902} & 0.1007     & 0.0945  \\
\midrule
3D-EMGP & \textbf{0.1118} & \textbf{0.0671}     & 0.0484     & \textbf{0.0693} & 0.1107     & \textbf{0.1058} & \textbf{0.0874}     & 0.1001    & \textbf{0.0876} \\
\bottomrule
\end{tabular}%

  }
  \label{tab:md17_apd}%
  \vskip -0.1in
\end{table}%

We simultaneously learn the energy and forces of each molecule trajectory on MD17. Table~\ref{tab:md17_apd} shows the corresponding results on energy prediction tasks with Table~\ref{tab:md17} in Sec.~\ref{sec:res}, which consistently indicates the superiority of our proposed 3D-EMGP.

\subsection{Complete Results for Ablation Studies}
\label{sec:abl_apd}

\begin{table}[htbp]
  \centering
  \setlength{\tabcolsep}{2.8pt}
  \caption{Complete results for different variants of 3D-EMGP.}
  \resizebox{\linewidth}{!}{
    \begin{tabular}{lccccccccc}
    \toprule
    Energy & Aspirin & Benzene & Ethanol & Malon. & Naph. & Salicylic & Toluene & Uracil & Average \\
    \midrule
    Base  & 0.2044  & 0.0755  & 0.0532  & 0.0748  & 0.1961  & 0.1289  & 0.1036  & 0.1165  & 0.1191  \\
    \midrule
    Ours  & \textbf{0.1118} & 0.0671  & \textbf{0.0484} & \textbf{0.0693} & \textbf{0.1107} & 0.1058  & \textbf{0.0874} & 0.1001  & \textbf{0.0876} \\
    EFP only & 0.1180  & 0.0695  & 0.0485  & 0.0713  & 0.1156  & 0.1111  & 0.0903  & 0.0999  & 0.0905  \\
    INP only & 0.1593  & 0.0684  & 0.0494  & 0.0750  & 0.1217  & 0.1199  & 0.0892  & \textbf{0.0962} & 0.0974  \\
    \midrule
    Gaussian & 0.1214  & 0.0759  & 0.0499  & 0.0716  & 0.1176  & \textbf{0.1050} & 0.0895  & 0.0984  & 0.0912  \\
    Distance & 0.1217 &	0.0732 & 0.0504 & 0.0710 & 0.1131 & 0.1202 & 0.0936 & 0.1021 & 0.0932 \\
    \midrule
    Direct & 0.1179  & \textbf{0.0661} & 0.0498  & 0.0725  & 0.1161  & 0.1163  & 0.0917  & 0.1006  & 0.0914  \\
    \bottomrule
    \toprule
    Force & Aspirin & Benzene & Ethanol & Malon. & Naph. & Salicylic & Toluene & Uracil & Average \\
    \midrule
    Base  & 0.3885  & 0.1861  & 0.0599  & 0.1464  & 0.3310  & 0.2683  & 0.1563  & 0.1323  & 0.2086  \\
    \midrule
    Ours  & \textbf{0.1560} & \textbf{0.1648} & 0.0389  & 0.0737  & \textbf{0.0829} & 0.1187  & 0.0619  & 0.0773  & \textbf{0.0968} \\
    EFP only & 0.1808  & 0.2211  & 0.0389  & 0.0773  & 0.0929  & 0.1773  & 0.0615  & 0.1043  & 0.1193  \\
    INP only & 0.2709  & 0.1704  & 0.0397  & 0.0756  & 0.1373  & 0.2184  & 0.0599  & 0.1080  & 0.1350  \\
    \midrule
    Gaussian & 0.1693  & 0.2361  & \textbf{0.0366} & \textbf{0.0692} & 0.0917  & \textbf{0.1164} & \textbf{0.0588} & \textbf{0.0697} & 0.1060  \\
    Distance & 0.2408 & 0.1861 & 0.0502 & 0.0854 & 0.1170 & 0.1595 & 0.0801 & 0.1148 & 0.1292 \\
    \midrule
    Direct & 0.2263  & 0.1649  & 0.0422  & 0.0952  & 0.1058  & 0.1854  & 0.0741  & 0.1194  & 0.1267  \\
    \bottomrule
    \end{tabular}%
   }
  \label{tab:abl_all}%
  \vskip -0.1in
\end{table}%

\paragraph{Contribution of each component} We implement 4 variants of the original 3D-EMGP to explore the contribution of each component. \textbf{EFP only} and \textbf{INP only} adjust the weight to $\lambda_1=1, \lambda_2=0$ and $\lambda_1=0, \lambda_2=1$. \textbf{Gaussian} changes the force target in Eq.~(11) into $\nabla_{\tilde{\mX}}\log p(\tilde{\mX}\mid\mX)=-\frac{\tilde{\mX}-\mX}{\sigma^2}$, and \textbf{Direct} updates the coordinates in the last layer of EGNN and predicts the atom forces from the difference of the output coordinates and the input one, \emph{i.e.}, $\mF=\frac{1}{\sigma}(\varphi_{\text{EGN}}(\mX,H)-\mX)$, where $\frac{1}{\sigma}$ is a normalization term to balance different noise scale. \textbf{Distance} adds Gaussian noises on distance matrix and requires the model to predict the noise applied on each pairwise distance. Detailed results for MD17 energy and force prediction tasks are shown in Table~\ref{tab:abl_all}, where the original model achieves better averaged MAEs than the variants. The results indicate the effectiveness of each component in our proposed 3D-EMGP.

\paragraph{The performance with different backbones} We apply 3D-EMGP on SchNet~\cite{schutt2017schnet} and TorchMD-ET~\cite{tholke2021equivariant} with 7 layers, 128 hidden dimensions. The edge feature is concatenated with the RBF distance expansions. Table~\ref{tab:backbone_all} illustrates that 3D-EMGP performs consistently well on a broad family of 3D backbones.

\begin{table}[htbp]
  \centering
    \setlength{\tabcolsep}{2.5pt}
  \caption{MAE on MD17 with different backbones.}
   \resizebox{\linewidth}{!}{
    \begin{tabular}{lccccccccc}
    \toprule
    Energy      & Aspirin & Benzene & Ethanol & Malon. & Naph. & Salicylic & Toluene & Uracil & Avg. \\
    \midrule
    SchNet & 0.1502  & 0.0725  & 0.0505  & 0.0780  & \textbf{0.1103} & 0.1220  & 0.0963  & 0.1078  & 0.0985  \\
    w/ Pretrain & \textbf{0.1384} & \textbf{0.0680} & \textbf{0.0489} & \textbf{0.0721} & 0.1104  & \textbf{0.1093} & \textbf{0.0908} & \textbf{0.0996} & \textbf{0.0922} \\
    \midrule
    TorhMD-ET & 0.1050  & 0.0711  & 0.0495  & 0.0710  & 0.1120  & 0.1026  & \textbf{0.0904} & 0.1010  & 0.0878  \\
    w/ Pretrain & \textbf{0.1038} & \textbf{0.0682} & \textbf{0.0485} & \textbf{0.0707} & \textbf{0.1105} & \textbf{0.1023} & 0.0906  & \textbf{0.1003} & \textbf{0.0869} \\
    \bottomrule
    \toprule
    Force      & Aspirin & Benzene & Ethanol & Malon. & Naph. & Salicylic & Toluene & Uracil & Avg. \\
    \midrule
    SchNet & 0.4131  & 0.2374  & 0.0982  & 0.1781  & 0.1835  & 0.3145  & 0.1810  & 0.2438  & 0.2312  \\
    w/ Pretrain & \textbf{0.3186} & \textbf{0.1654} & \textbf{0.0607} & \textbf{0.1038} & \textbf{0.1229} & \textbf{0.1973} & \textbf{0.0860} & \textbf{0.1211} & \textbf{0.1470} \\
    \midrule
    TorhMD-ET & 0.1216  & 0.1479  & 0.0492  & 0.0695  & 0.0390  & 0.0655  & 0.0393  & 0.0484  & 0.0726  \\
    w/ Pretrain & \textbf{0.1124} & \textbf{0.1417} & \textbf{0.0445} & \textbf{0.0618} & \textbf{0.0352} & \textbf{0.0586} & \textbf{0.0385} & \textbf{0.0477} & \textbf{0.0676} \\
    \bottomrule
    \end{tabular}%
  }
  \label{tab:backbone_all}%
  \vskip -0.1in
\end{table}%

\paragraph{Influence of sampling steps}  We conduct EFP pretraining task with sampling steps $T=1,5,10$ in Algorithm~\ref{alg:rg_sample}, and compare the finetuning performance on MD17. As shown in Table~\ref{tab:step}, different sampling steps achieve similar results, which implies that our method is not sensitive to $T$. Hence, we select $T=1$ for simplicity and efficiency in the remaining experiments.

\begin{table}[htbp]
  \centering
    \setlength{\tabcolsep}{2.3pt}
  \caption{MAE on MD17 with different sampling steps.}
  \resizebox{\linewidth}{!}{
    \begin{tabular}{lccccccccc}
    \toprule
    Energy & Aspirin & Benzene & Ethanol & Malon. & Naph. & Salicylic & Toluene & Uracil & Average \\
    \midrule
    Base  & 0.2044  & 0.0755  & 0.0532  & 0.0748  & 0.1961  & 0.1289  & 0.1036  & 0.1165  & 0.1191  \\
    \midrule
     $T=1$ & 0.1180  & 0.0695  & 0.0485  & 0.0713  & 0.1156  & 0.1111  & 0.0903  & 0.0999  & 0.0905  \\
   $T=5$ & 0.1119  & 0.0731  & 0.0493  & 0.0731  & 0.1111  & 0.1052  & 0.0926  & 0.1025  & 0.0899  \\
     $T=10$ & 0.1194  & 0.0733  & 0.0499  & 0.0711  & 0.1120  & 0.1062  & 0.0919  & 0.1005  & 0.0905  \\
    \bottomrule
    \toprule
    Force & Aspirin & Benzene & Ethanol & Malon. & Naph. & Salicylic & Toluene & Uracil & Average \\
    \midrule
    Base  & 0.3885  & 0.1861  & 0.0599  & 0.1464  & 0.3310  & 0.2683  & 0.1563  & 0.1323  & 0.2086  \\
    \midrule
     $T=1$ & 0.1808  & 0.2211  & 0.0389  & 0.0773  & 0.0929  & 0.1773  & 0.0615  & 0.1043  & 0.1193  \\
    $T=5$ & 0.1641  & 0.2208  & 0.0401  & 0.0768  & 0.0839  & 0.1501  & 0.0669  & 0.0982  & 0.1126  \\
    $T=10$ & 0.1670  & 0.2471  & 0.0378  & 0.0758  & 0.0829  & 0.1485  & 0.0599  & 0.1009  & 0.1150  \\
    \bottomrule
    \end{tabular}%
    }
  \label{tab:step}%
  \vskip -0.1in
\end{table}%

\paragraph{Balance between node- and graph-level tasks} We attempt different loss weights for pretraining and evaluate the model performance on MD17. As shown in Table~\ref{tab:lambda}, we find that: \textbf{1. } Combination of the node- and graph-level tasks improves the performance of each individual task. \textbf{2. } Different non-zero loss weights yield comparable results, which implies that the performance of our method is not very sensitive to specific loss weights. \textbf{3. } The $\lambda_1=1.0, \lambda_2=0.2$ setting achieves relatively better performance, and we keep this setting in the remaining experiments.

\begin{table}[htbp]
  \centering
    \setlength{\tabcolsep}{2.5pt}
  \caption{MAE on MD17 energy (top) and forces (bottom) tasks with different loss weights. $\lambda_1, \lambda_2$ stand for the loss weights for EFP and INP tasks, respectively.}
  \resizebox{\linewidth}{!}{
    \begin{tabular}{ccccccccccc}
    \toprule
    $\lambda_1$ & $\lambda_2$ & Aspirin & Benzene & Ethanol & Malon. & Naph. & Salicylic & Toluene & Uracil & Average \\
    \midrule
    1.0   & 0.2   & 0.1118  & 0.0671  & 0.0484  & 0.0693  & 0.1107  & 0.1058  & 0.0874  & 0.1001  & 0.0876  \\
    0.2   & 1.0   & 0.1083  & 0.0723  & 0.0496  & 0.0717  & 0.1131  & 0.1015  & 0.0883  & 0.0971  & 0.0877  \\
    0.5   & 0.5   & 0.1088  & 0.0696  & 0.0486  & 0.0710  & 0.1153  & 0.1041  & 0.0933  & 0.1002  & 0.0889  \\
    1.0   & 0.0   & 0.1180  & 0.0695  & 0.0485  & 0.0713  & 0.1156  & 0.1111  & 0.0903  & 0.0999  & 0.0905  \\
    0.0   & 1.0   & 0.1593  & 0.0684  & 0.0494  & 0.0750  & 0.1217  & 0.1199  & 0.0892  & 0.0962  & 0.0974  \\
    \bottomrule
    \toprule
    1.0   & 0.2   & 0.1560  & 0.1648  & 0.0389  & 0.0737  & 0.0829  & 0.1187  & 0.0619  & 0.0773  & 0.0968  \\
    0.2   & 1.0   & 0.1615  & 0.1700  & 0.0384  & 0.0716  & 0.0832  & 0.1211  & 0.0635  & 0.0933  & 0.1003  \\
    0.5   & 0.5   & 0.1824  & 0.1843  & 0.0413  & 0.0737  & 0.0856  & 0.1450  & 0.0627  & 0.0985  & 0.1092  \\
    1.0   & 0.0   & 0.1808  & 0.2211  & 0.0389  & 0.0773  & 0.0929  & 0.1773  & 0.0615  & 0.1043  & 0.1193  \\
    0.0   & 1.0   & 0.2709  & 0.1704  & 0.0397  & 0.0756  & 0.1373  & 0.2184  & 0.0599  & 0.1080  & 0.1350  \\
    \bottomrule
    \end{tabular}%
    }
  \label{tab:lambda}%
  \vskip -0.15in
\end{table}%

\paragraph{Influence of the quality of the pretraining dataset.} In this work, we adapt the GEOM-QM9 as the pretraining dataset. To investigate the effect of data quality on the pretrained model performance, we attempt to add noise to the conformers of the GEOM-QM9 dataset and analyze how the pretraining dataset with noisy, low-quality conformers affects the performance of the pretrained model. For implementation, given a conformation in GEOM-QM9, we perturb its atom coordinates $\mX$ with $\mX'=\mX+ 0.1\bm{\epsilon}$ where $\bm{\epsilon}\sim \mathcal{N}(0,I_{3N})$. We apply the proposed 3D-EMGP to this perturbed dataset. Results in Table~\ref{tab:noise_qm9} and Table~\ref{tab:noise_md17} show that adding noise on coordinates does have negative impact on the performance. However, compared with Base, the model still benefits from our pretraining and consistently yields favorable results in most of the metrics.

\begin{table}[htbp]
  \centering
    \setlength{\tabcolsep}{2.5pt}
  \caption{MAE on QM9 with model pretrained on the original and noisy dataset.}
    \resizebox{\linewidth}{!}{
    \begin{tabular}{lcccccccccccc}
    \toprule
   & $\alpha$      & $\Delta_{\epsilon}$        & $\epsilon_{\text{HOMO}}$       &  $\epsilon_{\text{LUMO}}$      & $\mu$         & $C_\nu$         & $G$          & $H$          & $R^2$         & $U$          & $U_0$         & ZPVE \\
    \midrule
    Base  & 0.070 & 49.9  & 28.0  & 24.3  & 0.031 & 0.031 & 10.1  & 10.9  & 0.067 & 9.7   & 9.3   & 1.51 \\
    3D-EMGP & 0.057 & 37.1  & 21.3  & 18.2  & 0.020 & 0.026 & 9.3   & 8.7   & 0.092 & 8.6   & 8.6   & 1.38 \\
    + Noisy data & 0.064 & 40.5  & 24.0  & 20.3  & 0.024 & 0.028 & 11.1  & 10.3  & 0.124 & 10.6  & 10.5  & 1.45 \\
    \bottomrule
    \end{tabular}%
    }
  \label{tab:noise_qm9}%
  \vskip -0.15in
\end{table}%

\begin{table}[htbp]
  \centering
    \setlength{\tabcolsep}{2.5pt}
  \caption{MAE on MD17 energy (top) and forces (bottom) with model pretrained on the original and noisy dataset.}
    \resizebox{\linewidth}{!}{
    \begin{tabular}{lcccccccc|c}
    \toprule
    Energy & Aspirin & Benzene & Ethanol & Malon. & Naph. & Salicylic & Toluene & Uracil & Average \\
    \midrule
    Base  & 0.2044 & 0.0755 & 0.0532 & 0.0748 & 0.1961 & 0.1289 & 0.1036 & 0.1165 & 0.1191 \\
    3D-EMGP & 0.1118 & 0.0671 & 0.0484 & 0.0693 & 0.1107 & 0.1058 & 0.0874 & 0.1001 & 0.0876 \\
    + Noisy data & 0.1019 & 0.0715 & 0.0490 & 0.0728 & 0.1136 & 0.1067 & 0.0906 & 0.099 & 0.0881 \\
    \midrule
    Force & Aspirin & Benzene & Ethanol & Malon. & Naph. & Salicylic & Toluene & Uracil & Average \\
    \midrule
    Base  & 0.3885 & 0.1861 & 0.0599 & 0.1464 & 0.3310 & 0.2683 & 0.1563 & 0.1323 & 0.2086 \\
    3D-EMGP & 0.1560 & 0.1648 & 0.0389 & 0.0737 & 0.0829 & 0.1187 & 0.0619 & 0.0773 & 0.0968 \\
    + Noisy data & 0.1568 & 0.2374 & 0.0392 & 0.0681 & 0.0979 & 0.1247 & 0.0568 & 0.0896 & 0.1088 \\
    \bottomrule
    \end{tabular}%
    }
  \label{tab:noise_md17}%
  \vskip -0.15in
\end{table}%

\paragraph{Impact on the edge construction method} As mentioned in~\ref{sec:detail_base}, we distinguish the node connections with 1,2,3-hop and others and determine the 1-hop connection by covalent bonds. We further investigate the necessity of utilizing the bond connections. Different from our original setting, we attempt to model the molecule as a fully-connected graph and compare the performance on QM9. Results in Table~\ref{tab:bond} indicate that even without the bond connections, our pretraining method still improves the performance consistently.

\begin{table}[htbp]
\vskip -0.1in
  \centering
    \setlength{\tabcolsep}{2.5pt}
  \caption{MAE on QM9 w/ and w/o bond connections.}
  \vskip -0.1in
     \resizebox{\linewidth}{!}{
    \begin{tabular}{lcccccccccccc}
\toprule
   & $\alpha$      & $\Delta_{\epsilon}$        & $\epsilon_{\text{HOMO}}$       &  $\epsilon_{\text{LUMO}}$      & $\mu$         & $C_\nu$         & $G$          & $H$          & $R^2$         & $U$          & $U_0$         & ZPVE \\
\midrule
    Base  & 0.070 & 49.9  & 28.0  & 24.3  & 0.031 & 0.031 & 10.1  & 10.9  & 0.067 & 9.7   & 9.3   & 1.51 \\
    w/o bond & 0.070 & 48.6  & 29.3  & 25.0  & 0.030 & 0.031 & 10.8  & 10.5  & 0.075 & 10.0  & 10.7  & 1.60 \\
    3D-EMGP & 0.057 & 37.1  & 21.3  & 18.2  & 0.020 & 0.026 & 9.3   & 8.7   & 0.092 & 8.6   & 8.6   & 1.38 \\
    w/o bond & 0.059 & 38.7  & 22.4  & 19.4  & 0.022 & 0.026 & 9.2   & 8.7   & 0.085 & 8.2   & 8.2   & 1.34 \\
    \bottomrule
    \end{tabular}%
    }
  \label{tab:bond}%
  \vskip -0.15in
\end{table}%

\begin{figure}[htbp]
    \centering
    \includegraphics[width=0.45\textwidth]{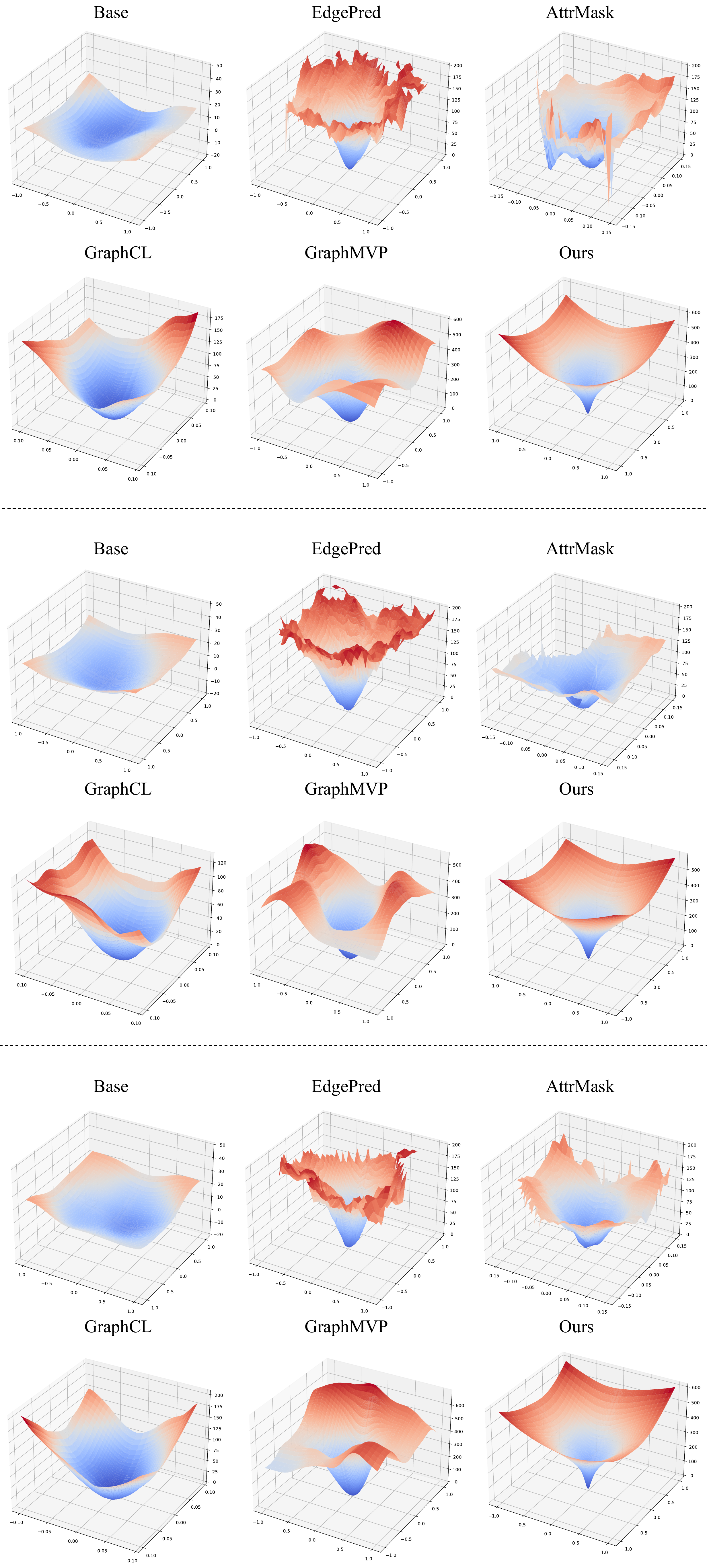}
    \caption{Additional energy landscape visualizations on MD17.}
    \label{fig:vis_more}
\end{figure}

\begin{figure}[htbp]
    \centering
    \includegraphics[width=0.45\textwidth]{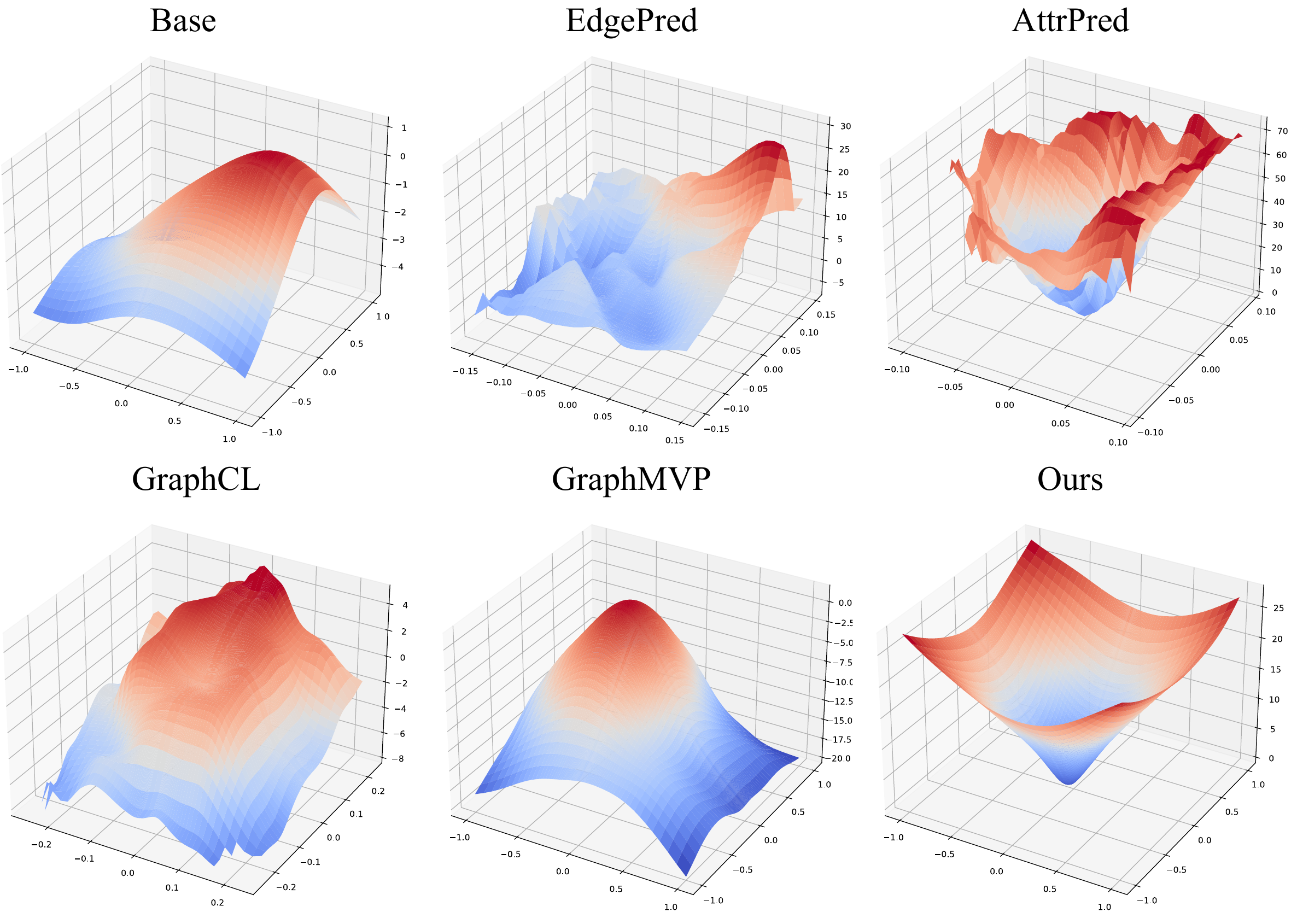}
    \caption{Energy landscape visualization on QM9.}
    \label{fig:vis_qm9}
\end{figure}

\section{More Visualizations}
\label{sec:vis}

\subsection{More Visualization on MD17}

Similar to Figure~\ref{fig:vis} in Sec.~\ref{sec:vis}, we sample 3 other direction pairs $\mD_1,\mD_2\in\R^{3\times N}$ according to Gaussian distribution and visualize the energy landscapes. Results in Figure~\ref{fig:vis_more} indicate that our pretrained model consistently provides a smooth and decreasing energy landscape.

\subsection{Visualization on QM9}

We further apply the visualization method on QM9 by finetuning an energy head on $U$ (Internal Energy at 298K) prediction task. Different from MD17, which provides a dynamic trajectory for one given molecule, QM9 provides only one metastable conformer for each molecule. As shown in Fig.~\ref{fig:vis_qm9}, the visualization results indicate that: \textbf{1. } 2D generative methods, like AttrMask or EdgePred, create irregular energy surface \emph{w.r.t.} the changes of 3D coordinates. \textbf{2. } GraphCL and GraphMVP generate upper convex surface, which contradicts with the fact that  metastable conformers stay in the local minima in the energy landscape, probably because these methods have never been exposed to the perturbed unstable conformers during the backbone pretraining and energy head finetuning stages. \textbf{3. } Our method still delivers a smooth conical surface converging towards the metastable point, which is consist with the phenomenon on MD17.

\section{Code and Related Assets}
\label{sec:code}

Our code is available at \url{https://github.com/jiaor17/3D-EMGP}. The implementation and hyper-parameter settings are partially based on ConfGF~\cite{shi2021learning}. The implementation of baseline methods refers to GraphMVP~\cite{DBLP:journals/corr/abs-2110-07728}. The implementation of backbone equivariant geometric networks refers to the official codes of EGNN~\cite{satorras2021n}, SchNetPack~\cite{doi:10.1021/acs.jctc.8b00908}, TorchMD-Net~\cite{tholke2021equivariant} and Pytorch Geometric~\cite{Fey/Lenssen/2019}. 

\end{document}